\RequirePackage{silence}
\WarningFilter{revtex4-2}{Failed to recognize}
\WarningFilter{caption}{Unknown document class}
\documentclass[aps,prx,10pt,groupedaddress,twocolumn]{revtex4-2}
\usepackage[T1]{fontenc} 
\usepackage[english]{babel} 
\usepackage[utf8]{inputenc}

\usepackage[table,usenames,dvipsnames]{xcolor}
\usepackage[colorlinks=true,
			hypertexnames=false
			]{hyperref}
\PassOptionsToPackage{linktocpage}{hyperref} 

\addto\extrasenglish{   }
\usepackage{subfig}
\usepackage[section]{placeins}
\usepackage{amsmath,amssymb,mathtools,amsthm,dsfont}
\usepackage{bm}
\usepackage{easybmat} 
\usepackage{paralist}
\usepackage{enumitem}
\usepackage{graphicx}
\usepackage{acro}
\makeatother
\usepackage{tikz}
\usepackage{tikz-qtree}
\usetikzlibrary{positioning}

\newcommand{\pc}[2]{\p{#1 \mid #2}}
\newcommand{\expectation}[1]{\mathbb E[#1]}
\newcommand{\p}[1]{\mathbb{P}\left[#1\right]}

\newcommand{\primed}{^\prime}
\newcommand{\btheta}{\bm \theta}
\newcommand{\partialderiv}[2]{\frac{\partial #1}{\partial #2}}

\newcommand{\pbar}{\bar{p}}
\newcommand{\Sb}{\bm S}
\newcommand{\Eb}{\bm E}

\newcommand{\kron}[2]{\delta_{#1,#2}} 

\newcommand{\exti}[1]{^{(#1)}}
\newcommand{\resti}[1]{_{#1}}
\newcommand{\restib}[1]{_{|#1}} 
\newcommand{\wi}[1]{_{-#1}}
\newcommand{\eei}[1]{\bm e\exti{#1}} 
 
\newcommand{\Ei}[1]{E\resti{#1}}

\newcommand{\Ebi}[1]{\bm E\restib{#1}}

\newcommand{\Emap}{g}
\newcommand{\Smap}{\mathcal{T}}
\newcommand{\Mmap}{\mathcal{M}}
\newcommand{\Synset}{\mathbb{F}_2^l}
\newcommand{\pei}{\theta_{\bm e}^i}
\newcommand{\peia}[2]{\theta^{#2}_{#1}}
\newcommand{\dtheta}{D_{\bm \theta}}
\newcommand{\errset}{\mathcal{E}^l_n}

\newcommand{\Pei}[1]{\p{\Ei{i} = #1}}
\newcommand{\PeiS}[1]{\p{\Ei{i} = #1, \Sb}}
\newcommand{\PeicS}[1]{\pc{\Ei{i} = #1}{\Sb}}

\newcommand{\Sst}{\bm S^\ast}

\newcommand{\qht}{\hat{q}}
\newcommand{\wqht}{_{-\qht}}
\newcommand{\eeq}{\eei{q}}

\newcommand{\Eqht}{E\resti{\qht}}

\newcommand{\eht}{\hat{e}}
\newcommand{\Ewqht}{\bm E\wi{\qht}}
\newcommand{\Est}{\bm E^{\ast}}
\newcommand{\Estq}{E^{\ast}\resti{q}}
\newcommand{\Estqht}{E^{\ast}\resti{\qht}}
\newcommand{\Modst}{$\Sst$-modification}
\newcommand{\Extst}{$\vec \eht^{(\qht)}$-extension}


\newcommand{\nest}{n_{\mathrm{est}}}
\newcommand{\nit}{n_{\mathrm{it}}}
\newcommand{\nconcat}{n_{\mathrm{concat}}}

\renewcommand{\vec}[1]{\boldsymbol{#1}}
\DeclareMathOperator{\wt}{wt}
\newcommand{\e}{\ensuremath\mathrm{e}}
\renewcommand{\i}{\ensuremath\mathrm{i}}

\DeclareMathOperator*{\argmax}{arg\max}

\newtheorem{theorem}{Theorem} 
\newtheorem{lemma}{Lemma} 
\newtheorem{proposition}{Proposition} 
\newtheorem{notation}{Notation} 
\newtheorem{definition}{Definition} 
\newtheorem{corollary}{Corollary} 

\DeclareAcronym{EM}{short=EM,long=expectation maximization}
\DeclareAcronym{HEM}{short=HEM,long=hard assignment expectation maximization}
\DeclareAcronym{BP}{short=BP,long=belief propagation}
\DeclareAcronym{MSE}{short=MSE,long=mean squared error}
\DeclareAcronym{CRB}{short=CRB,long=Cramér-Rao bound}
\DeclareAcronym{SMC}{short=SMC,long=sequential Monte Carlo}

\makeatletter 
 \hypersetup{pdftitle = {Optimal noise estimation from syndrome statistics of quantum codes},
       pdfauthor = {Thomas Wagner, Hermann Kampermann, Dagmar Bruss, Martin Kliesch},
       pdfsubject = {Quantum information theory},
       pdfkeywords = {
                      quantum, error, correction, QEC, code, QECC, 
                      stabilizer, decoding, 
                      estimation, noise, from, syndrome, statistics, measurements
                      identifiability, weight, 
                      conditional independence, conditionally independent, 
                      perfect, five qubit, five-qubit, concatenated, code, 
                      belief propagation, 
                      expectation maximization,
                      Cramer-Rao, bound,
                      Fisher information, 
                      mean squared, error,
                      Steane code, Pauli noise,
                      logical error, decomposable error model, 
                      maximum likelihood
                      } 
      }
\makeatother

\begin{document}
\title{Optimal noise estimation from syndrome statistics of quantum codes}

\author{Thomas Wagner}
\email[]{thomas.wagner@uni-duesseldorf.de}
\author{Hermann Kampermann}
\author{Dagmar Bru\ss}
\author{Martin Kliesch}
\affiliation{Heinrich Heine University Düsseldorf}
\begin{abstract}
Quantum error correction allows to actively correct errors occurring 
in a quantum computation when the noise is weak enough.
To make this error correction competitive information about the specific noise is required.  
Traditionally, this information is obtained by benchmarking the device before operation. 
We address the question of what can be learned from only the measurements done during decoding. 
Such estimation of noise models was proposed for surface codes, exploiting their special structure, and in the limit of low error rates also for other codes. 
However, so far it has been unclear under what general conditions noise models can be estimated from the syndrome measurements. 
In this work, we derive a general condition for identifiability of the error rates. 
For general stabilizer codes, we prove identifiability under the assumption that the rates are small enough.  
Without this assumption we prove a result for perfect codes.
Finally, we propose a practical estimation method with linear runtime for concatenated codes. 
We demonstrate that it outperforms other recently proposed methods and that the estimation is optimal in the sense that it reaches the \acl{CRB}. 
Our method paves the way for practical calibration of error corrected quantum devices during operation. 
\end{abstract}

\maketitle

\section{Introduction}
Quantum error correction is an essential ingredient in quantum computing schemes. 
 When employing active quantum error correction via stabilizer codes, the decoding generally requires information about the error rates of all qubits. These error rates can vary significantly between qubits \cite{Tannu_VariabilityAwareNISQ} and they might even vary in time.
 In contrast to traditional benchmarking before operation, a new approach is to estimate error rates online from the syndrome statistics of the code itself \cite{Fujiwara_InstantaneousChannelEstimationCSS,
Fowler_ScalableExtractionOfErrorModelsFromQEC,
Huo_2017,
Wootton_QiskitBenchmarking,
FlorjanczykBrun_InSituAdaptiveEncoding,
OBrien_AdaptiveWeightEstimator,
Combes_InSituCharOfDevice}. It should be stressed that the syndrome statistics is the only information that can be measured without destroying the encoded information.
As pointed out by \citet{Fowler_ScalableExtractionOfErrorModelsFromQEC}, this results in a noise model that is directly applicable for the decoder.
 Furthermore, it allows for the tracking of time-varying error rates \cite{Huo_2017, OBrien_AdaptiveWeightEstimator}.
 Experimentally, online optimization of control parameters in a 9-qubit superconducting quantum processor has been demonstrated in a Google experiment \cite{Kelly_ScalableInSituCalibrationDuringErrorDetection}.
 
 However, apart from the work of \citet{OBrien_AdaptiveWeightEstimator}, there has been very little theoretical investigation of the estimation problem, see \autoref{sec:SO-Solution} for a detailed discussion. 
 For example, it is not clear for what combinations of noise models and codes the unknown parameters are identifiable from the syndrome statistics.
 Evidently, some restrictions must apply since estimating completely general noise would require measurements which destroy the logical state.
 For some codes and noise models, including surface codes with independent Pauli noise on each qubit, the analytical method developed by \cite{OBrien_AdaptiveWeightEstimator} proves parameter identifiability.
 On the other hand, for many other important codes such as the 5-qubit code \cite{Laflamme_5QubitCode}, the Steane code \cite{Steane_SteaneCode}, and more general color codes \cite{Bombin_ColorCodes} this method is not applicable.
 
 In this work, we address this question by deriving a general condition for parameter identifiability, and using it to explicitly prove results for the 5-qubit code and the Steane code.
 Furthermore, we introduce an explicit error rates estimator, similar to techniques employed in classical distributed source coding \cite{Zia_LDPCErrorRateEstimation}, for concatenated codes and simulate it on the concatenated 5-qubit code.
 This estimator outperforms previously proposed methods \cite{Wootton_QiskitBenchmarking,Huo_2017,Fowler_ScalableExtractionOfErrorModelsFromQEC} in this setting, because it does not require the assumption of very low error rates.

\subsection*{Stabilizer Codes \label{sec:StabilizerCodes}}
Let us introduce our notation while briefly summarizing stabilizer codes. The Pauli group $\mathcal{P}_n$ on $n$ qubits is the group of \emph{Pauli strings} generated by the Pauli operators $\{X,Y,Z,I\}$ with phases, 
\begin{equation}
\mathcal{P}_n = \{\epsilon \bigotimes_{i = 1}^n e\resti{i} \mid \epsilon \in \{\pm 1, \pm \i\}, e\resti{i} \in \{I,X,Y,Z\}\} \, . 
\end{equation}
The Pauli group modulo phases
\begin{equation}
\mathsf P_n = \mathcal{P}_n / \{\pm 1, \pm \i\}
\end{equation}
is called the \emph{effective Pauli group}. We denote the $i$-th tensor factor of $\bm e \in \mathsf P_n$ as $e\resti{i}$. The Pauli operator acting as $e \in \mathsf P_1$ on qubit $i$ and as the identity elsewhere is denoted $\bm e\exti{i} \in \mathsf P_n$.
A stabilizer code encoding $k = n-l$ qubits is defined by a commutative subgroup $\mathcal S$ of $\mathcal{P}_n$ with generators $\bm g_1,\dots,\bm g_{l}$ \cite{NielsenChuang}.
 The code-space is the simultaneous $+1$-eigenspace of the generators.
 Phases are generally not important for quantum error correction, so we consider data errors as elements of the effective Pauli group.
 For an error $\bm e \in \mathsf P_n$, we define the \emph{syndrome} $\Sb(\bm e) \in \mathbb{F}_2^{l}$ entry-wise by
\begin{equation}
\Sb(\bm e)_i \coloneqq 
\begin{cases} 
0, & \text{if $\bm g_i$ and $\bm e$ commute in $\mathcal P_n$, } 
\\ 
1, & \text{if $\bm g_i$ and $\bm e$ anti-commute in $\mathcal P_n$.} 
\end{cases}
\end{equation}
To correct an error $\bm e \in \mathsf{P}_n$, a recovery $\bm r \in \mathsf{P}_n$ is applied based on the measured syndrome. Since errors that only differ by stabilizers act equivalently on the encoded information, the recovery is successful if the equivalence class $[\bm e \bm r]$ is trivial, i.e. $[\bm e \bm r] = [I] \in \mathsf P_n / \mathcal S$.

\section{Identifiability Conditions \label{sec:IdentifiabilityConditions}}
In this section, we derive a general condition for the identifiability of the error rates. 
Then we prove for perfect codes that this condition is always fulfilled whenever the error rates are sufficiently close. 
\subsection{General Conditions \label{sec:GeneralIdentifiabilityConditions}}
We consider	a stabilizer code with $n$ qubits and $l$ stabilizer generators. Let us first define identifiability. Given is a parameterized noise model, mapping a vector of error rates $\btheta$ to a vector $(\p{\bm E}_{\bm E \in \mathsf P_n})$ specifying the probability $\p{\bm E}$ for each error $\bm E \in \mathsf P_n$. This induces the map $\Mmap: \btheta \mapsto (\p{\Sb})_{\Sb \in \Synset}$, mapping a parameter vector to the corresponding syndrome statistics via
\begin{equation} \label{eq:P[S]}
\p{\bm S} = \sum_{\bm E \in \mathsf P_n \, : \, \bm S(\bm E) = \bm S} \p{\bm E} \, ,
\end{equation}
where $\p{\bm S}$ is the induced probability of observing the syndrome $\bm S \in \Synset$. Error rates are identifiable from the syndrome statistics if the map $\Mmap$ is injective. 
This will usually not be the case, due to symmetry around error rates of $0.5$. 
However, we can still hope that the parameters are at least identifiable if we restrict to some region in the space of parameters $\btheta$. 
\begin{definition}[Local identifiability]
We say that error rates are \emph{locally identifiable} at $\bm \theta$ if there exists $\varepsilon > 0$ such that the map $\Mmap$ is injective on the ball ${B_\varepsilon(\bm \theta) \coloneqq \{\bm \theta\primed \mid \|\bm \theta\primed - \bm \theta \|_2 < \varepsilon \}}$.
\end{definition} 
For ease of exposition, we will focus in this section on independent single qubit Pauli noise, which is a simple but widely studied error model. A substantial generalization of \autoref{lemma:LowRatesIdent} and \autoref{lemma:GeneralIdentifiability} to much more general error models, including measurement errors, can be found in \autoref{sec:GeneralizedIdentResults}. For now let us assume that errors on the $i$-th qubit of the code are modeled by the Pauli channel
\begin{equation}
\rho \mapsto (1-\theta^i_X - \theta^i_Y - \theta^i_Z)\rho + \theta^i_X X \rho X + \theta^i_Y Y \rho Y + \theta^i_Z Z \rho Z \; .
\end{equation}
with $\theta^i_X,\theta^i_Z,\theta^i_Y \in [0,1]$ such that $\theta^i_X + \theta^i_Z + \theta^i_Y \leq 1$.
The parameter vector $\btheta$ for this error model is given by the error rates $(\theta^i_e)_{i \in \{1,\dots,n\},e\in\{X,Y,Z\}}$ of all non-trivial single qubit errors. 
By the inverse function theorem, local identifiability at $\btheta$ holds if and only if the Jacobian matrix  $J = D_{\bm \theta} \Mmap$ at $\btheta$ has full (column) rank. We will label the rows of the Jacobian by syndromes $\bm S$ and the columns by parameters $\theta^i_{e}$, and denote entries with square brackets, e.g. as $J[\bm S, \theta^i_{e}]$.
In the limit of low error rates, it is intuitive that identification of error rates is possible since a syndrome always arises from the matching single qubit error, and no combined errors occur.
 Thus, the only requirement is that the single errors can be identified from the syndromes. This just means that the code has distance at least 3, i.e. only trivial codes are excluded. This leads to the estimators proposed in Refs.~\cite{Wootton_QiskitBenchmarking,Fowler_ScalableExtractionOfErrorModelsFromQEC,Huo_2017}. We confirm this intuition by calculating the Jacobian of $\Mmap{}$ and checking its rank:
\begin{proposition}[Identifiability for small error rates]
\label{lemma:LowRatesIdent}
 For a quantum code subject to independent single qubit Pauli noise, error rates are locally identifiable at $\bm \theta = \bm 0$ if and only if $\bm S (\bm e) \neq \bm S(\bm e \primed)$ for every choice of two different single qubit errors $\bm e, \bm e\primed$.
\end{proposition}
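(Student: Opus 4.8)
The plan is to invoke the criterion stated just above the proposition: by the inverse function theorem, local identifiability at $\btheta = \bm 0$ holds if and only if the Jacobian $J = \dtheta \Mmap$, evaluated at $\bm 0$, has full column rank. Since the columns are indexed by the $3n$ parameters $\theta^j_f$ (with $j \in \{1,\dots,n\}$, $f \in \{X,Y,Z\}$) and the rows by syndromes $\Sb \in \Synset$, the whole statement reduces to computing $J$ at $\bm 0$ and deciding when its columns are linearly independent.

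First I would compute the entries $J[\Sb, \theta^j_f] = \partialderiv{\p{\Sb}}{\theta^j_f}$ at $\btheta = \bm 0$. Inserting the factorized single-qubit noise $\p{\bm E} = \prod_i \p{E_i}$ into \eqref{eq:P[S]} and differentiating with the product rule, the decisive observation is that at $\btheta = \bm 0$ every nontrivial factor $\theta^i_{E_i}$ vanishes while every identity factor $1 - \theta^i_X - \theta^i_Y - \theta^i_Z$ equals $1$. Hence an error $\bm E$ of weight $\geq 2$ carries at least two vanishing factors, so any single derivative still leaves one factor equal to zero and contributes $0$; only the identity $\bm E = I$ (contributing $-1$ in the trivial-syndrome row) and the matching weight-one error $\bm f\exti{j}$ (contributing $+1$ in row $\Sb(\bm f\exti{j})$) survive. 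The bookkeeping collapses to the compact form
\begin{equation}
  J[\Sb, \theta^j_f] = \kron{\Sb}{\Sb(\bm f\exti{j})} - \kron{\Sb}{\bm 0},
\end{equation}
so the column for $\theta^j_f$ is the indicator of the syndrome $\Sb(\bm f\exti{j})$ minus the indicator of the trivial syndrome $\bm 0$.

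The second step is a purely linear-algebraic analysis of these columns. A column is the zero vector exactly when $\Sb(\bm f\exti{j}) = \bm 0$, and two columns coincide exactly when their single-qubit errors share a common nonzero syndrome; conversely, if the $3n$ syndromes $\Sb(\bm f\exti{j})$ are all nonzero and pairwise distinct, then inspecting the coefficient of each basis row $\Sb \neq \bm 0$ in a vanishing linear combination forces every coefficient to vanish, giving independence. Thus full column rank is equivalent to the $3n$ single-qubit syndromes being nonzero and pairwise distinct.

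It remains to reconcile this with the hypothesis of the proposition, which demands only pairwise distinctness and not nontriviality; this is the one genuinely nonroutine point. The key is that the syndrome is a homomorphism, $\Sb(\bm e \bm e') = \Sb(\bm e) \oplus \Sb(\bm e')$. If some single-qubit syndrome $\Sb(\bm f\exti{j})$ were trivial, then writing the remaining two Paulis $\bm g\exti{j},\bm h\exti{j}$ on qubit $j$, which satisfy $\bm g\exti{j}\bm h\exti{j} = \bm f\exti{j}$ in $\mathsf P_n$, yields $\Sb(\bm g\exti{j}) = \Sb(\bm h\exti{j})$ — a collision between two distinct single-qubit errors that already violates pairwise distinctness. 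Hence pairwise distinctness automatically entails nontriviality, and the condition of the proposition is exactly equivalent to full column rank, closing the argument. The main technical obstacle is the Jacobian computation, in particular the clean vanishing of all weight-$\geq 2$ errors; the homomorphism observation is short but is precisely what makes the bare distinctness hypothesis sufficient.
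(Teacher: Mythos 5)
Your proposal is correct and follows essentially the same route as the paper's proof: the paper factors $\Mmap = \Smap \circ \Emap$, shows that at $\btheta = \bm 0$ the derivative of the error distribution only picks up the weight-zero and weight-one errors (column for $\theta^j_f$ equal to the indicator of $\bm f\exti{j}$ minus the indicator of $I$), and then pushes these columns through the linear map $\Smap$ to get exactly your indicator-difference columns $\kron{\Sb}{\Sb(\bm f\exti{j})} - \kron{\Sb}{\bm 0}$. The one place you go beyond the paper is the closing homomorphism argument ($\Sb(\bm g\exti{j}) \oplus \Sb(\bm h\exti{j}) = \Sb(\bm f\exti{j})$, so a trivial single-qubit syndrome forces a collision between the other two Paulis on that qubit): the paper's proof asserts that pairwise-distinct syndromes suffice for linear independence of the image columns without addressing the case of a zero column, so your observation is a small but genuine tightening of the argument for the single-qubit Pauli case.
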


A proof is provided in \autoref{sec:ProofLowRatesIdent}. Our first central result is an identifiability condition without the assumption of low rates.
This establishes a connection between local identifiability and the posterior distribution of errors for each qubit.

\begin{theorem}[General identifiability condition]
\label{lemma:GeneralIdentifiability}
Consider a quantum code subject to independent single qubit Pauli noise.
Assume that all error rates are non-zero and that $\p{\bm S} > 0$ for all syndromes $\bm S \in \Synset$. 
Then error rates are locally identifiable at $\bm \theta$ if and only if the matrix $\tilde{J}$ with entries
\begin{equation}
\tilde{J}[\bm S, \theta^i_{e}] = \frac{\p{E_i = e | \bm S}}{\p{E_i = e}} - \frac{\p{E_i = I | \bm S}}{\p{E_i = I}}
\label{eq:JTilde}
\end{equation}
has full column rank.
Here, $\p{E_i = e | \bm S}$ is the conditional probability that the $i$-th qubit is affected by the error $e \in \mathsf P_1$ given that the observed syndrome is $\p{\Sb}$.
\end{theorem}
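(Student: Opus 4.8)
The plan is to show that the true Jacobian $J = \dtheta\Mmap$ and the matrix $\tilde J$ of \eqref{eq:JTilde} differ only by multiplication with an invertible diagonal matrix, so that the two share the same column rank. Since the text has already reduced local identifiability to $J$ having full column rank (via the inverse function theorem), establishing $J = D\tilde J$ with $D$ invertible immediately yields the claimed equivalence.

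First I would differentiate a single error probability. Because the noise is independent, $\p{\bm E} = \prod_{j} \p{E_j = e_j}$, and the parameter $\theta^i_e$ enters only the $j=i$ factor: it appears with coefficient $+1$ in $\p{E_i = e} = \theta^i_e$ and with coefficient $-1$ in $\p{E_i = I} = 1 - \theta^i_X - \theta^i_Y - \theta^i_Z$, and not at all in the remaining single-qubit probabilities. Writing the product over $j \neq i$ as $\p{\bm E}/\p{E_i = e_i}$, which is legitimate since the single-qubit probabilities are positive, the product rule gives
\begin{equation}
\partialderiv{\p{\bm E}}{\theta^i_e} =
\begin{cases}
\p{\bm E}/\p{E_i = e}, & e_i = e, \\
-\p{\bm E}/\p{E_i = I}, & e_i = I, \\
0, & \text{otherwise.}
\end{cases}
\end{equation}

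Next I would assemble the Jacobian entry by summing over all errors producing a fixed syndrome, as in \eqref{eq:P[S]}, and grouping the surviving terms according to whether $e_i = e$ or $e_i = I$. Each group collapses into a joint probability, $\sum_{\bm E \,:\, \bm S(\bm E) = \bm S,\, e_i = e} \p{\bm E} = \p{E_i = e, \bm S}$, so that
\begin{equation}
J[\bm S, \theta^i_e] = \frac{\p{E_i = e, \bm S}}{\p{E_i = e}} - \frac{\p{E_i = I, \bm S}}{\p{E_i = I}}.
\end{equation}
Applying Bayes' rule $\p{E_i = e, \bm S} = \p{E_i = e | \bm S}\,\p{\bm S}$ to both numerators factors out the common $\p{\bm S}$ and reproduces exactly $\tilde J[\bm S, \theta^i_e]$ from \eqref{eq:JTilde}. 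Hence $J = D\tilde J$, where $D$ is the diagonal matrix indexed by syndromes with diagonal entries $\p{\bm S}$.

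Finally, the assumption $\p{\bm S} > 0$ for all $\bm S \in \Synset$ makes $D$ invertible, so $J$ and $\tilde J$ have identical column rank and one has full column rank exactly when the other does. I do not expect a genuine obstacle: the argument is a direct differentiation followed by a rescaling, and the two hypotheses are precisely what keep it well posed --- nonzero rates (together with a positive identity probability) license the division by the single-qubit factors, while positive syndrome probabilities guarantee invertibility of $D$. The one point needing care is the bookkeeping of the $+1/-1$ derivative structure, in particular that the $e_i = I$ term is shared across all three columns $\theta^i_X, \theta^i_Y, \theta^i_Z$ attached to qubit $i$, which is the origin of the subtracted identity contribution in \eqref{eq:JTilde}.
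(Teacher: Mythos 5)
Your proposal is correct and matches the paper's argument in all essentials: the paper phrases the step as passing to the entry-wise logarithm of $\Mmap$ and differentiating $\ln(\p{\bm S})$, which is exactly your left-multiplication of $J$ by the invertible diagonal matrix with entries $1/\p{\bm S}$, and the core computation (the $+1/-1$ derivative of $\p{\bm E}$, the collapse of the sum over errors with fixed syndrome into joint probabilities, and the Bayes-rule rescaling) is identical. The only caveat, shared with the paper, is that one must also read the hypothesis as guaranteeing $\p{E_i = I} > 0$, which you correctly flag.
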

The proof is provided in \autoref{sec:ProofGeneralIdentifiability}.

\subsection{Identifiability for Perfect Codes \label{sec:PerfectCodeIdent}}
We demonstrate the analytical application of \autoref{lemma:GeneralIdentifiability} by considering the class of \emph{perfect codes}.

\begin{definition}[Perfect single error correcting quantum code \cite{Gaitan2008_book}] 
A quantum code $C$ on $n$ qubits is called a \emph{perfect single error correcting} code if there is a bijection between the set of non-trivial single qubit errors and the set of non-trivial syndromes, i.e.\ there exists a bijective map
\begin{equation}
f: \{\eei{i} \mid e \in \{X,Y,Z\}, \, i \in \{1,\dots,n\}\} \rightarrow \Synset \setminus \{\bm 0\}
\end{equation}
\end{definition}
These codes are called ``perfect'' because they saturate the (quantum) Hamming bound.
A well known example of such a code is the 5-qubit code \cite{Laflamme_5QubitCode}. Other families of perfect codes are cyclic Hamming codes \cite{Gottesman_CyclicHammingCodes} and a class of twisted codes \cite{Bierbrauer_QuantumTwistedCodes}. The main result of this section is that error rates for such codes are locally identifiable around the points of equal rates, even for high error rates. This provides another concrete class of codes where identification of error rates is possible.
\begin{theorem}[Identifiability for perfect codes]
\label{lemma:PerfectCodeIdent}
Let $C$ be a perfect single error correcting quantum code on $n$ qubits subject to independent single qubit Pauli noise.
Then the error rates are locally identifiable around any point $\bm\theta$ with equal error rates, i.e.\ if there exists $p \in (0,1)$ such that $\theta^i_e = p$ for all $i$ and all $e \in \{X,Y,Z\}$. 
\end{theorem}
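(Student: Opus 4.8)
The plan is to invoke the general criterion of \autoref{lemma:GeneralIdentifiability}. At an equal-rates point with $p$ in the physically admissible interior $(0,1/3)$ all rates are positive, and every syndrome has positive probability: each nonzero $\bm S$ is the syndrome of a weight-one error (by the perfect-code bijection $f$), occurring with probability $p(1-3p)^{n-1}>0$, while the trivial syndrome comes from the identity. Hence the hypotheses hold and it suffices to show that the matrix $\tilde J$ of \eqref{eq:JTilde} has full column rank $3n$. Because $f$ identifies the $3n$ single-qubit errors with the $2^l-1$ nonzero syndromes, I would index the columns by the nonzero vectors $\bm v = \bm S(\bm e^{(i)}) \in \Synset$, writing $i(\bm v)$ and $e(\bm v)$ for the qubit and Pauli of the corresponding single-qubit error.

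First I would rewrite the entries of $\tilde J$ through the syndrome distribution $Q_i(\bm T) \coloneqq \p{\bm S(\bm F)=\bm T}$ of the error $\bm F$ restricted to the qubits other than $i$. Splitting an error into its action $\bm e^{(i)}$ on qubit $i$ times $\bm F$, and using that the syndrome map is an $\FF_2$-linear homomorphism together with independence across qubits, gives $\p{\Ei{i}=e,\bm S}=p\,Q_i(\bm S+\bm v)$ and $\p{\Ei{i}=I,\bm S}=(1-3p)\,Q_i(\bm S)$. Dividing by the priors and by the strictly positive $\p{\bm S}$ (a row rescaling that preserves rank) reduces the problem to the matrix $M[\bm S,\bm v] = Q_{i(\bm v)}(\bm S+\bm v) - Q_{i(\bm v)}(\bm S)$, the directional difference of $Q_{i(\bm v)}$ in the direction $\bm v$.

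The central step is a Walsh--Hadamard transform over $\Synset$. Writing $\hat g(\bm y)=\sum_{\bm S}(-1)^{\bm y\cdot\bm S}g(\bm S)$, independence factorizes $\hat Q_i(\bm y)=\prod_{j\ne i}c_j(\bm y)$ with $c_j(\bm y)=(1-3p)+p\sum_{e\in\{X,Y,Z\}}(-1)^{\bm y\cdot \bm S(e^{(j)})}$, and the difference becomes $\hat M[\bm y,\bm v]=\bigl((-1)^{\bm y\cdot\bm v}-1\bigr)\hat Q_{i(\bm v)}(\bm y)$. The observation special to equal rates is that the three syndromes on a qubit satisfy $\bm S(Y^{(j)})=\bm S(X^{(j)})+\bm S(Z^{(j)})$, so $c_j(\bm y)$ takes only the values $1$ (when $\bm y$ is orthogonal to all three) and $1-4p$ (otherwise). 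Since the prefactor is nonzero only when $\bm y\cdot\bm v=1$, and then $\bm y$ is not orthogonal to $\bm v$, the factor $c_{i(\bm v)}(\bm y)$ equals $1-4p$ exactly on the support of the prefactor. Thus $\hat M[\bm y,\bm v]=\tfrac{-2}{1-4p}\,[\bm y\cdot\bm v=1]\,\hat P(\bm y)$, where $\hat P(\bm y)=\prod_j c_j(\bm y)=(1-4p)^{m(\bm y)}$ is the transform of the full syndrome distribution. Since the transform is invertible on the rows, this factorizes $\hat M$ (with the same rank as $\tilde J$) as $\operatorname{diag}\bigl(\hat P(\bm y)\bigr)$ times the combinatorial matrix $A[\bm y,\bm v]=[\bm y\cdot\bm v=1]$.

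It then remains to show $A$ has full column rank $2^l-1$. Writing $A=\tfrac12(\mathbf 1 - H)$ on the nonzero columns, where $H[\bm y,\bm v]=(-1)^{\bm y\cdot\bm v}$ is the invertible Hadamard matrix and $\mathbf 1=H_{\bm 0}$, any relation $\sum_{\bm v\ne\bm 0}a_{\bm v}(H_{\bm 0}-H_{\bm v})=0$ forces every $a_{\bm v}=0$ by linear independence of the Hadamard columns, giving full column rank and hence identifiability. The main obstacle — and the only delicate point — is the scalar $1-4p$: the factorization requires $\operatorname{diag}(\hat P(\bm y))$ invertible, i.e.\ $\hat P(\bm y)=(1-4p)^{m(\bm y)}\ne 0$, which fails precisely at the completely depolarizing point $p=1/4$. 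There $c_j(\bm y)$ can vanish, $\hat Q_i$ must be evaluated directly, and one finds $\hat M\equiv 0$, so this degenerate point must be excluded or treated separately; away from it the factorization above completes the proof.
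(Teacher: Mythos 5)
Your proof is correct and takes a genuinely different route from the paper's. Both arguments start from \autoref{lemma:GeneralIdentifiability} and the Bayes rewriting \eqref{eq:JPerfectCode}, but from there the paper proceeds combinatorially: it establishes \autoref{lemma:PerfectCodeEqualCondProbs} --- that $\pc{\Sb}{E\resti{i}=e}$ is independent of $e$ away from the two distinguished syndromes --- by deriving a recursion (\autoref{lemma:WeightDistEquation}) for the modified weight distributions $k_w(e,i,\Sb)$ via the machinery of \Modst{}s and extensions, and concludes that each column of $\tilde J$ is supported on the single row $\Sb(\eei{i})$. You instead pass to the Walsh--Hadamard transform over $\Synset$, use independence to factorize $\hat Q_i(\bm y)=\prod_{j\neq i}c_j(\bm y)$, observe that $\Sb(\bm Y\exti{j})=\Sb(\bm X\exti{j})\oplus\Sb(\bm Z\exti{j})$ forces $c_j(\bm y)\in\{1,\,1-4p\}$ at equal rates, and reduce the rank question to the matrix $[\bm y\cdot\bm v=1]$, handled by linear independence of the Hadamard characters. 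All intermediate steps (the row rescalings, the factorization $\p{E_i=e,\Sb}=p\,Q_i(\Sb\oplus\bm v)$, the cancellation of $c_{i(\bm v)}(\bm y)$ on the support of the prefactor) check out. Your version is shorter, and notably the rank step never uses the full perfect-code bijection --- only that distinct single-qubit errors have distinct nonzero syndromes --- so it would in fact prove the conclusion for any code satisfying the condition of \autoref{lemma:LowRatesIdent}, a strictly stronger statement than the theorem.

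The exceptional point $p=1/4$ that you isolate is not an artifact of your method: it is a genuine gap in the theorem as stated and in the paper's proof. The paper asserts without justification that $\tilde J[\Sb(\eei{i}),\theta^i_e]\neq 0$. At $p=1/4$ every qubit is completely depolarized; the syndromes of the $3(n-1)$ single-qubit errors on qubits $j\neq i$ are distinct and nonzero, hence cannot lie in a proper subspace of $\FF_2^l$ (which has at most $(3n-1)/2$ nonzero elements), so the syndrome of the error restricted to those qubits is uniform on $\Synset$ and $\p{\Sb}=2^{-l}$ independently of $\btheta^i$. Thus $\Mmap$ is non-injective on every ball around that point and $\tilde J\equiv 0$, exactly as your transform computation finds. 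The theorem therefore needs the additional hypothesis $p\neq 1/4$, which your proof supplies and the paper's does not.
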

Note that the condition on $\bm \theta$ above does \emph{not} mean that we restrict ourselves to a simple single parameter model.
We still allow all estimated error rates to vary individually, but require that the actual error rates are close to being equal.
In order to prove \autoref{lemma:PerfectCodeIdent} via \autoref{lemma:GeneralIdentifiability}, we have to check the rank of the matrix $\tilde{J}$ given in \eqref{eq:JTilde}. Using Bayes Theorem, we can express its entries as
\begin{equation}
\tilde{J}[\Sb, \theta^i_e] = \frac{\pc{\Sb}{E\resti{i} = e} - \pc{\Sb}{E\resti{i} = I}}{\p{\Sb}} 
 \, .
\label{eq:JPerfectCode}
\end{equation}
The key insight, which might be of independent interest, is that most of the conditional probabilities in this expression are equal:
\begin{lemma}
\label{lemma:PerfectCodeEqualCondProbs}
Consider a perfect single error correcting code on $n$ qubits subject to independent single qubit noise where all error rates are equal. Let $e,e\primed \in \mathsf P_1$. Then for any syndrome $\Sb \in \Synset \setminus \{\bm 0\}$ and qubit $i$ such that $\Sb \neq \Sb(\eei{i})$ and $\Sb \neq \Sb((\bm e\primed)\exti{i})$ we have
\begin{equation}
\pc{\Sb}{E\resti{i}=e} = \pc{\Sb}{E\resti{i}=e\primed}
\end{equation}
\end{lemma}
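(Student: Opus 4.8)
The plan is to reduce the statement to a property of the syndrome distribution of the errors on the qubits other than $i$, which I would then read off from the combinatorics of a perfect code by a short character computation. Throughout I would use that the syndrome map $\Sb\colon \mathsf P_n \to \Synset$ is a homomorphism into $(\Synset,+)$, since commutation with each generator is an $\FF_2$-bilinear form. Writing $\Ewi{i}$ for the error on the qubits $j \neq i$, independence of the single-qubit errors together with $\Sb(\bm E) = \Sb(\eei{i}) + \Sb(\Ewi{i})$ gives
\begin{equation}
\pc{\Sb}{\Ei{i} = e} = \p{\Sb(\Ewi{i}) = \Sb + \Sb(\eei{i})} \, .
\end{equation}
Setting $u \coloneqq \Sb + \Sb(\eei{i})$ and $v \coloneqq \Sb + \Sb((\bm e\primed)\exti{i})$, the two hypotheses say precisely that $u, v \neq \bm 0$, while $u + v = \Sb((ee\primed)\exti{i})$ lies in the two-dimensional subspace $V_i \coloneqq \{\bm 0, \Sb(X\exti{i}), \Sb(Y\exti{i}), \Sb(Z\exti{i})\}$ (a subspace because $\Sb(Y\exti{i}) = \Sb(X\exti{i}) + \Sb(Z\exti{i})$). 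Hence $u$ and $v$ are nonzero and lie in a common $V_i$-coset, so it suffices to prove that $\p{\Sb(\Ewi{i}) = u}$ depends on a nonzero $u$ only through the indicator $\mathds{1}[u \in V_i]$.

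For the second step I would analyse this distribution through its Fourier coefficients. For each qubit $j$ the three single-qubit syndromes are the nonzero elements of a two-dimensional subspace $V_j$, and the perfect-code bijection says that the blocks $V_j \setminus \{\bm 0\}$, each of size $3$, partition the $2^l - 1 = 3n$ nonzero syndromes. For $w \in \Synset$ the qubit-$j$ factor of the coefficient is $(1-3p) + p\,c_j(w)$ with $c_j(w) \coloneqq \sum_{e \in \{X,Y,Z\}} (-1)^{\langle w, \Sb(e\exti{j})\rangle}$, which equals $3$ if $w \perp V_j$ and $-1$ otherwise. The key combinatorial input is that for every nonzero $w$ the count $m(w) \coloneqq |\{j : w \perp V_j\}|$ is independent of $w$: intersecting the partition with the hyperplane $H_w \coloneqq \{x : \langle w, x\rangle = 0\}$ and counting its nonzero points (a block inside $H_w$ contributes all three of its points, a block not inside contributes exactly one, since a nontrivial functional vanishes on a single nonzero vector of a two-dimensional space) yields
\begin{equation}
3\,m(w) + \big(n - m(w)\big) = 2^{l-1} - 1 \, ,
\end{equation}
so that $m(w) = (n-1)/4$.

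For the third step, the Fourier coefficient of $\Sb(\Ewi{i})$ is the same product restricted to $j \neq i$,
\begin{equation}
\expectation{(-1)^{\langle w, \Sb(\Ewi{i})\rangle}} = \prod_{j \neq i}\big((1-3p) + p\,c_j(w)\big) \, ,
\end{equation}
in which the number of factors equal to $1$ is $m(w) - \mathds{1}[w \perp V_i]$. By the previous step this coefficient is therefore two-valued on nonzero $w$, depending only on whether $w \in V_i^\perp$. Fourier inversion, using $(V_i^\perp)^\perp = V_i$ to evaluate the relevant character sums, would then show that $\p{\Sb(\Ewi{i}) = u}$ depends on a nonzero $u$ only through $\mathds{1}[u \in V_i]$. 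Since $u$ and $v$ are nonzero and share this indicator, the two probabilities coincide, which is the claim.

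The crux, and the only place the perfect-code hypothesis enters, is the constancy of $m(w)$: it is exactly the partition of the nonzero syndromes into the single-qubit subspaces $V_j$ that flattens an otherwise $w$-dependent Fourier coefficient. I expect this counting to be the main obstacle, with a secondary technical point being the bookkeeping when qubit $i$ is removed, which destroys full flatness but leaves precisely the two-valued, $V_i$-keyed structure that the first reduction requires.
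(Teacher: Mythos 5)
Your proposal is correct, but it takes a genuinely different route from the paper's. The paper reduces the claim to the statement that the modified weight enumerator $k_w(e,i,\Sb) = |\{\Eb \mid E\resti{i}=e,\ \wt(\Eb\wi{i})=w,\ \Sb(\Eb)=\Sb\}|$ is independent of $e$, and proves this by a recursion in $w$ built from the perfect-code property: every error has a unique single-qubit ``modification'' to syndrome $\Sst$, and one classifies by cases which weight-$(w-1)$ errors fail to extend to weight-$w$ ones. You instead work on the character side of $\Synset$: the perfect-code bijection enters only through the fact that the sets $V_j\setminus\{\bm 0\}$ partition $\Synset\setminus\{\bm 0\}$, which forces $m(w)=|\{j : w\perp V_j\}|$ to equal the constant $(n-1)/4$ for every nonzero $w$ (indeed $3m(w)+(n-m(w))=2^{l-1}-1$ with $2^l=3n+1$), and hence flattens the Fourier transform of the syndrome distribution to a two-valued function keyed on $w\in V_i^\perp$; inversion then gives exactly the two-valued structure of $\p{\Sb(\Ewi{i})=u}$ on nonzero $u$ that your first reduction requires. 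I checked that reduction as well, including the cases $e=I$ or $e\primed=I$, where $u=\Sb\neq\bm 0$ is precisely the stated hypothesis, and the equal-rates assumption is used exactly where it should be, namely to make the single-qubit Fourier factor $(1-3p)+p\,c_j(w)$ identical across qubits. Your route is substantially shorter, yields closed-form expressions for the conditional probabilities as a byproduct, and isolates the role of the perfect-code hypothesis more transparently (any partition of the nonzero syndromes into two-dimensional subspaces minus the origin would do); what the paper's recursion buys is explicit access to the individual weight coefficients $k_w$, including in the excluded case $\Sst=\Sb(\eei{\qht})$, via different initial conditions for the same recurrence.
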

The proof is provided in \autoref{sec:ProofPerfectCodeEqualCondProbs}.
This Lemma immediately implies that if $\Sb \neq \bm 0$ and $\Sb \neq \Sb(e_{i})$, then $\tilde{J}[\bm S, \theta^i_e] = 0$. We can ignore the case $\bm S = \bm 0$ due to normalization, and in the case $\Sb = \Sb(e_{i})$ we have $\tilde{J}[\bm S, \theta^i_e] \neq 0$. Thus the columns of $\tilde{J}$ are linearly independent unit vectors, i.e. $\tilde{J}$ has full rank. This proves \autoref{lemma:PerfectCodeIdent}.

The arguments of the proof straightforwardly generalize to other noise models as long as the perfect code condition is fulfilled, i.e.\ there is a bijection between syndromes and elementary errors. 
For example one could consider simple noise models where Pauli $X$ and Pauli $Z$ errors occur independently. The rates of such a model are locally identifiable on the Steane code around points of equal rates, since the Steane code reduces to the classical Hamming code when only one type of errors is considered. The Hamming code is known to be a perfect code.
Estimation of such a model on the Steane code was considered in Ref.~\cite{Huo_2017}. Thus, \autoref{lemma:PerfectCodeIdent} also provides a theoretical background for the results presented there.

\section{Numerical Estimation Method}
In this section, we complement the previous results with a practical estimation method, which is based on the combination of \ac{BP} and \ac{EM}. In the limit of low error rates, methods based on ``hard assignments'' were proposed independently by \cite{Wootton_QiskitBenchmarking,Fowler_ScalableExtractionOfErrorModelsFromQEC, Huo_2017}. They use either the recovery output by a (``hard'') decoder or the lowest weight error corresponding to a syndrome.
 Inspired by techniques from classical distributed source coding \cite{Zia_LDPCErrorRateEstimation}, we instead consider an estimation method that uses the full information about the distribution of errors given a certain syndrome, by combining a ``soft'' decoder \cite{Poulin_DecodingQuantumBlockCodes} with the \acl{EM} algorithm \cite{Dempster_ExpectationMaximization,Koller_PGM}.

\subsection{Concatenated Codes and Belief Propagation}
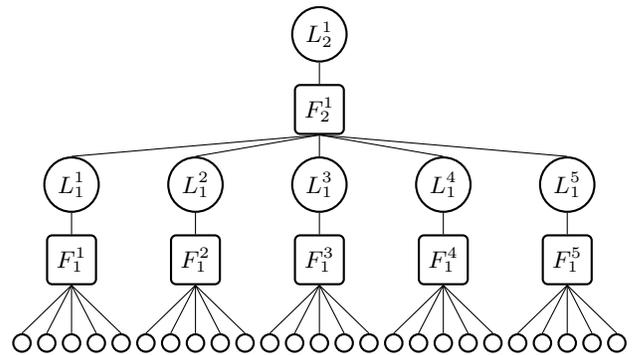
\begin{figure}[t]
\centering
\begin{tikzpicture}[rounded corners=.5ex,inner sep = .6ex] 
\tikzstyle{error}=[circle,thick,draw]
\tikzstyle{stabilizer}=[rectangle,thick,draw, minimum size = 2em]
\tikzstyle{errorfactor}=[circle,thick,draw, minimum size = 2em]
\tikzstyle{edge}=[very thick]

\tikzset{level distance=1cm}

\Tree[.\node[errorfactor]{$L^1_2$};
    [.\node[stabilizer]{$F^1_2$};
      [.\node[errorfactor]{$L^1_1$};
        [.\node[stabilizer]{$F^1_1$};
          [.\node[error]{};
          ]
          [.\node[error]{};
          ]
          [.\node[error]{};
          ]
          [.\node[error]{};
          ]
          [.\node[error]{};
          ]
        ]
      ]
      [.\node[errorfactor]{$L^2_1$};
        [.\node[stabilizer]{$F^2_1$};
          [.\node[error]{};
          ]
          [.\node[error]{};
          ]
          [.\node[error]{};
          ]
          [.\node[error]{};
          ]
          [.\node[error]{};
          ]
        ]
      ]
      [.\node[errorfactor]{$L^3_1$};
        [.\node[stabilizer]{$F^3_1$};
          [.\node[error]{};
          ]
          [.\node[error]{};
          ]
          [.\node[error]{};
          ]
          [.\node[error]{};
          ]
          [.\node[error]{};
          ]
        ]
      ]
      [.\node[errorfactor]{$L^4_1$};
        [.\node[stabilizer]{$F^4_1$};
          [.\node[error]{};
          ]
          [.\node[error]{};
          ]
          [.\node[error]{};
          ]
          [.\node[error]{};
          ]
          [.\node[error]{};
          ]
        ]
      ]
      [.\node[errorfactor]{$L^5_1$};
        [.\node[stabilizer]{$F^5_1$};
          [.\node[error]{};
          ]
          [.\node[error]{};
          ]
          [.\node[error]{};
          ]
          [.\node[error]{};
          ]
          [.\node[error]{};
          ]
        ]
      ]   
    ]
   ]
\end{tikzpicture}
\caption{The factor graph representation of the 2 times concatenated 5-qubit code. 
The circles depict variable nodes representing (logical) errors, i.e.\ each variable takes values in $\mathsf P_1$. 
The squares depict factor nodes representing the stabilizer checks.
}
\label{fig:FactorGraph5QubitCode}
\end{figure}
Let us briefly summarize concatenated codes and their maximum-likelihood decoding \cite{Poulin_DecodingQuantumBlockCodes}. We consider independent single qubit Pauli errors. 
 A concatenated quantum code is obtained by encoding each qubit of a quantum code again in the same code.
 This defines a tree structure, where the logical qubit of a code-block is a ``physical qubit'' in the next layer, as illustrated in \autoref{fig:FactorGraph5QubitCode} for the 5-qubit code.
 We can view this as a graphical representation of the probability distribution over all possible errors given the measured syndrome, called a \emph{factor graph}.
 The root node represents the total logical error.
 Maximum-likelihood decoding is done by computing its marginal distribution to find the most likely logical operator.
 Computation of marginal probabilities is efficiently possible using the \ac{BP} algorithm (see e.g. Ref. \cite{Bishop_ML}).
 \Ac{BP} works by passing messages along the edges of the graph. To compute the marginal of the root node it suffices to pass messages upwards, starting from the leaves.
 Doing an additional downwards pass, we can also calculate the marginals of the leaf nodes, i.e. the distribution of errors on a qubit given the measured syndrome.
 The  computational effort of this method scales linearly in the number of qubits.
 
\subsection{Error Rates Estimation and Expectation Maximization}

Starting from an initialization $\btheta^{(0)}$ of the estimated error rates and given a set $D$ of measured syndromes, we can calculate a new estimate of the error rates using the \ac{EM} algorithm, i.e. iterating the following steps until convergence.
\begin{enumerate}
\item Expectation step: Compute the \emph{expected sufficient statistics}
\begin{equation*}
M_E(e_i | \btheta^{(k)}) = \sum_{\Sb \in D} \p{E\resti{i} = e | \Sb, \btheta^{(k)}}
\end{equation*}
 based on the current estimate $\btheta^{(k)}$ of the error rates.
\item Maximization step: Compute a new estimate $\btheta^{(k+1)}$ of the error rates by normalizing the expected sufficient statistics:
\begin{equation}
(\theta^{(k+1)})^i_e = \frac{M_E[e_i | \btheta^{(k)}]}{\sum_{e\primed \in \mathsf P_1} M_E[e\primed_i | \btheta^{(k)}]}
\label{eq:EMMaximization}
\end{equation}
\end{enumerate}
Computationally, the main effort is in calculating the conditional probabilities needed for the expectation step. The key point is that this can be done efficiently using \ac{BP}. In an online estimation setting, the first iteration of \ac{EM} introduces almost no overhead, since the marginals calculated during decoding can be used. Further iterations require re-decoding of the syndromes and are thus roughly as expensive as decoding.
We will also compare our estimator with the ``hard assignment'' method \cite{Huo_2017,
Fowler_ScalableExtractionOfErrorModelsFromQEC,
Wootton_QiskitBenchmarking}, which is the best known scalable method. 
We extend this method slightly by allowing for multiple iterations. It can then be expressed as a variant of \ac{EM}, called \ac{HEM} (see Ref.~\cite{Koller_PGM}). 
 It consists of iterating the steps:
\begin{enumerate}
\item For each syndrome $\Sb \in D$, compute the most likely error
\begin{equation*}
{\Eb_{map}(\Sb) = \argmax_{\Eb \in \mathsf P_n}(\p{\Eb | \Sb, \btheta^{(k)}})}
\end{equation*} 
\item Obtain the new error rates by counting how often each single qubit error appears:
\begin{equation*}
(\theta^{(k+1)})^i_e = \frac{\sum_{\Sb \in D} \kron{E_{map}(\Sb)\resti{i}}{e} }{|D|} \; .
\end{equation*}
Here $\delta$ is the Kronecker-delta.
\end{enumerate}
 Instead of the marginals, only the most likely error for each syndrome is considered.
It can be computed efficiently using the max-sum algorithm which works similar to \ac{BP}, see e.g. \cite{Bishop_ML}.

\subsection{Numerical Results}
\label{sec:NumericalResults}
\begin{figure*}[tb]
\includegraphics[scale=1]{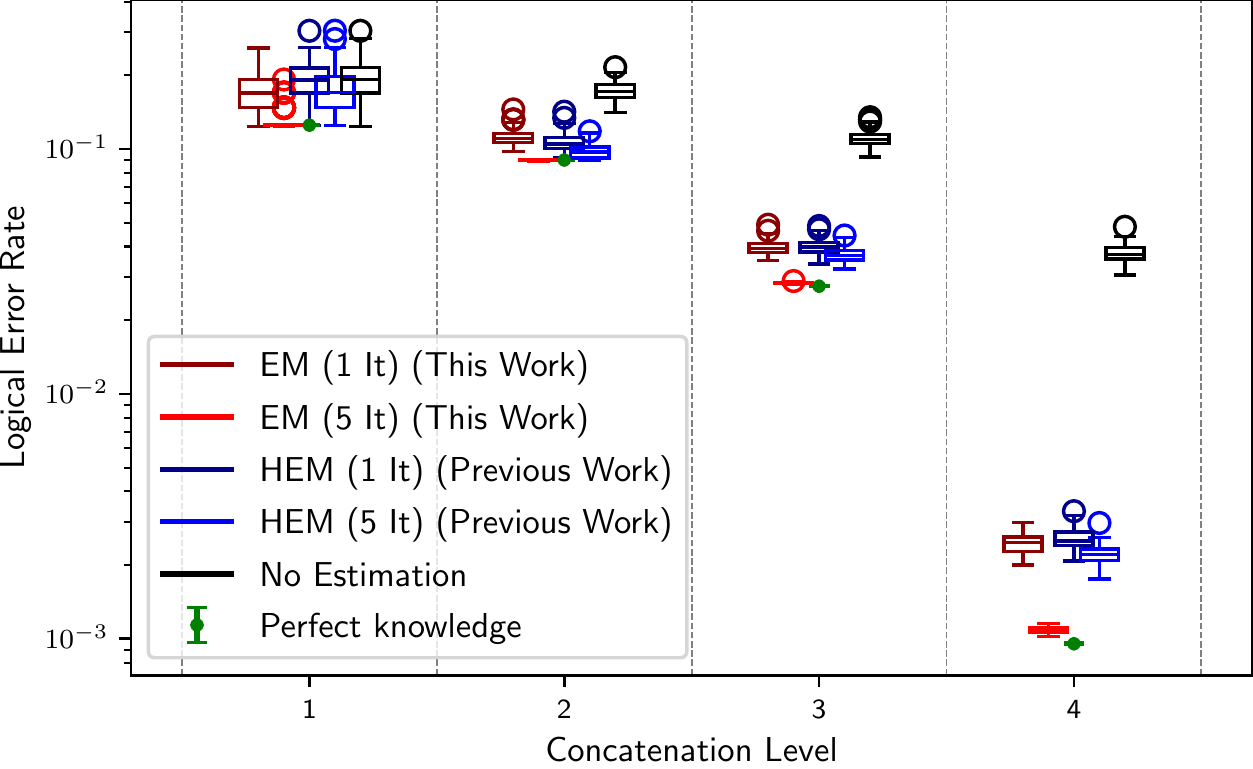}
\caption{Logical error rate of the maximum likelihood decoder. Each point is a box-plot including 100 runs with random initializations and estimation data, except for the perfect knowledge decoder, where the error bars indicate a $95 \%$ Clopper-Pearson confidence interval. 
The boxes extend from the lower to the upper quartile of values, with a line at the median. The whiskers extend to the last data point within 1.5 interquartile ranges of the box in each direction. Outliers beyond this are shown individually as circles. 
The parameters were $p  = 0.13$, $\alpha = 20$ and $\nest = 10^3$. \label{fig:ErrorRates_p013_A20}}
\end{figure*}

\begin{figure*}[t]
\includegraphics[scale=1]{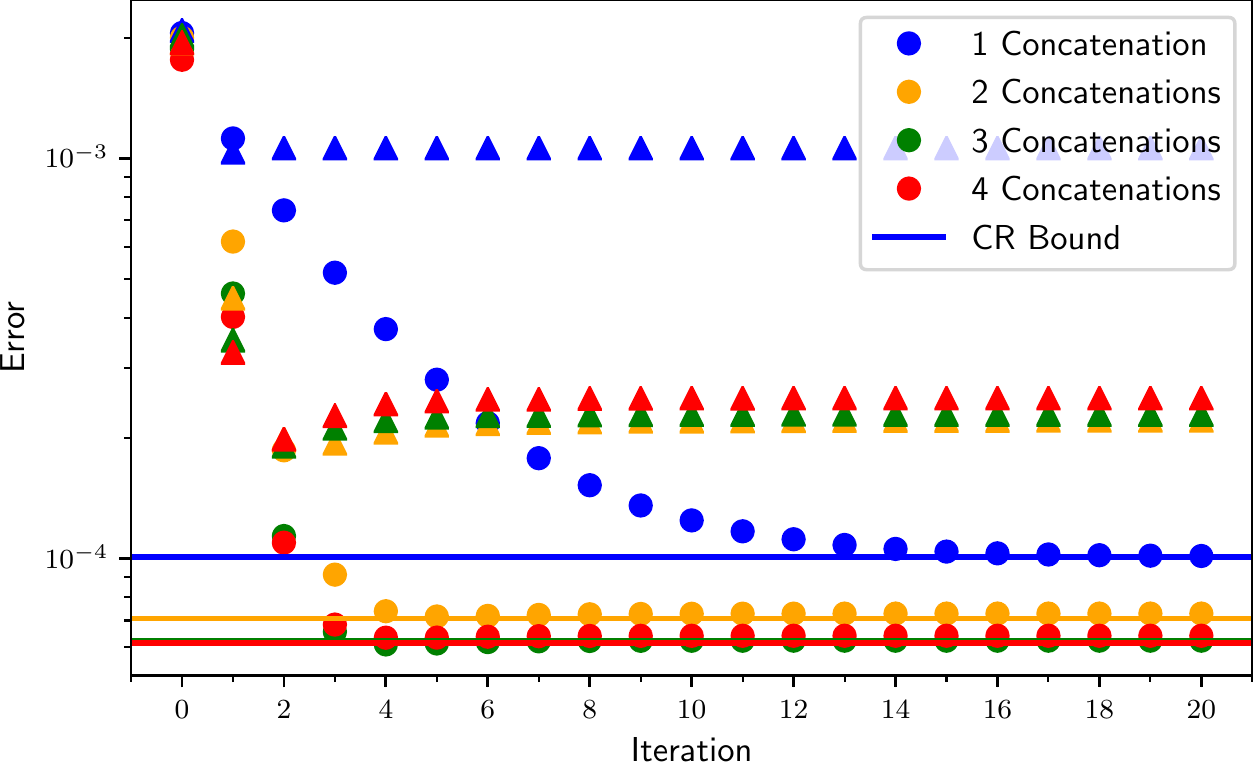}
\caption{Comparison of the \ac{MSE} in $\theta^1_X$ between \ac{EM} (circles, this work) and \ac{HEM} (triangles, previous work). The \acl{CRB} for each concatenation level is indicated by a line in the matching color.
The parameters were $p = 0.13$, $\alpha = 20$, and $\nest = 1000$. 
\label{fig:MSEOverConcat_A20_p013_nest1000}
}
\end{figure*}
In the following, we present a numerical comparison of our estimator (\ac{EM}) and the ``hard assignment'' estimator (\ac{HEM}).
 In light of our previous identifiability results, we consider the 5-qubit code, concatenated with itself, subject to independent depolarizing noise with error rate $p$ on each qubit.
 Extending the method to a phenomenological noise model with measurement errors is straightforward, and some results are shown in \autoref{sec:AdditionalNumericalResults}.
 We initialize the algorithm randomly around the actual rates, with a precision controlled by a real parameter $\alpha$ (higher is more accurate).
To be precise, for each qubit $i$, we sample error rates $\btheta^i$ from a Dirichlet distribution
\begin{equation}
\p{\btheta^i} = \frac{1}{B(\vec{\alpha})} \prod_{e \in \mathsf P_1} (\theta^i_e)^{\alpha_e}
\end{equation}
where $\alpha_I = (1- 3p)\alpha$, $\alpha_X = \alpha_Y = \alpha_Z = p\alpha$ and $B(\alpha)$ is a normalization constant.
Such an initialization could be obtained from previous benchmarking or an educated guess. We then run the estimator for $\nit$ iterations on a data set of $\nest$ syndromes generated from the actual distribution. Using a fixed initialization and random actual error rates was also tested for $\alpha = 20, \nest = 1000$ and did not significantly change the \ac{MSE} of the estimate of the parameter vector. We chose $p = 0.13$ which is close to the threshold of the code \cite{Poulin_DecodingQuantumBlockCodes, Rahn_ConcatCodes}, both because we are interested in the regime of high error rates, and because estimating logical error rates is difficult in the regime of low rates. A comparison of logical error rates before and after the estimation, using a relatively bad initialization, is shown in \autoref{fig:ErrorRates_p013_A20}. We also compare with the ``perfect knowledge decoder'' that is given knowledge of the actual error rates. Logical error rates were estimated by decoding $10^5 - 10^6$ random errors, except for the perfect knowledge decoder where $10^8$ random errors were used.
A clear improvement is observed even after 1 iteration, and for 5 iterations \ac{EM} was able to reach close to optimal error rates, while \ac{HEM} showed no further improvement after the first iteration.
We also confirmed that the \ac{MSE} of the \ac{EM} estimator is optimal in the sense that it reaches the \acl{CRB}, which lower bounds the \ac{MSE} of  any unbiased estimator (\autoref{fig:MSEOverConcat_A20_p013_nest1000}, more details and results in \autoref{sec:AdditionalNumericalResults}). The \ac{HEM} estimator showed significantly higher \ac{MSE}.
Finally, we note that since it is a form of maximum-likelihood estimation we expect the estimator to be robust in case of model-misspecification \cite{White_MLModelMismatch} -- quantifying the robustness is left for future research.  

\FloatBarrier
\clearpage
\onecolumngrid
\section{Details and Proofs}
In this section, we will give further details and generalizations on some topics and provide all the proofs that were previously omitted. Furthermore, we present more extensive numerical tests of our estimator.
\subsection{Analytical Solution Under a Conditional Independence Assumption}
\label{sec:SO-Solution}
\citet{OBrien_AdaptiveWeightEstimator} have derived an analytical solution of the estimation problem for certain models. 
Here, we re-derive this solution in a slightly more general setting and discuss the underlying assumptions and limitations by giving examples of quantum codes that cannot be treated in this way.

 The estimation method is considered for a circuit noise model, where errors can affect each part of the error correction circuit, including measurements.
\begin{definition}[Independent binary circuit noise]
Let $\{X_q\}_{q=1,\dots,m}$ denote a collection of (multi-qubit) Pauli errors, where each error may affect one or multiple sites in the error detection circuit. 
Under \emph{independent binary circuit noise}, each error occurs independently, and the error $X_q$ occurs with probability $\theta_q$.
\end{definition}
The errors in $\{X_q\}_{q=1,\dots,m}$ will also be referred to as \textit{elementary errors}.

In such a model, the errors can be treated as binary variables, where $X_q = 1$ with probability $\theta_q$ and $X_q = 0$ with probability $1-\theta_q$. Furthermore, the outcomes of the stabilizer generator measurements can be denoted by binary variables $S_i$, where $S_i = 1$ if the total error anti-commutes with the $i'$th generator and $S_i = 0$ otherwise.

Then, the rates of errors that affect multiple stabilizers can be estimated using the following proposition.

\begin{proposition}
\label{lemma:SO-1}
Consider a stabilizer code subject to independent binary circuit noise. Let $S_1$,$S_2$ be two syndrome bits and $X$ be an elementary error such that the following three conditions are fulfilled:
\begin{enumerate}
\item $\pc{S_1 = S_2}{X} = \p{S_1 = S_2}$
\item $\pc{S_i = 1}{X} = \pc{S_i = 0}{\bar X}$ for $i = 1, 2$
\item $S_1 \bot S_2 | X$, i.e.\ $S_1$ is conditionally independent of $S_2$ given $X$
\end{enumerate}
where $\bar X = 1 - X$.
Then
\begin{equation}
\p{X = 1}\p{X = 0} = \frac{\expectation{S_1S_2} - \expectation{S_1}\expectation{S_2}}{1 -  2\expectation{S_1 \oplus S_2}}
\label{eq:SO1}
\end{equation}
where $\oplus$ is addition modulo 2 and $\expectation{\, \cdot\,}$ denotes expectation values.
\end{proposition}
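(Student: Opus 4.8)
The plan is to condition everything on the single binary variable $X$ and let the conditional independence do the heavy lifting. Write $\theta = \p{X = 1}$ and set $a_i \coloneqq \pc{S_i = 1}{X = 0}$ for $i = 1,2$. The second assumption, read at $X = 1$ (so that $\bar X = 0$), gives $\pc{S_i = 1}{X = 1} = \pc{S_i = 0}{X = 0} = 1 - a_i$, so the conditional mean is the affine function $\expectation{S_i \mid X} = a_i + (1-2a_i)X$ of the Bernoulli variable $X$. The entire proof then reduces to showing that the numerator and the denominator of \eqref{eq:SO1} both carry the common factor $(1-2a_1)(1-2a_2)$, which cancels and leaves $\theta(1-\theta) = \p{X=1}\p{X=0}$.

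For the numerator I would recognize $\expectation{S_1 S_2} - \expectation{S_1}\expectation{S_2} = \operatorname{Cov}(S_1,S_2)$ and apply the law of total covariance,
\begin{equation*}
\operatorname{Cov}(S_1,S_2) = \expectation{\operatorname{Cov}(S_1,S_2 \mid X)} + \operatorname{Cov}\!\big(\expectation{S_1 \mid X},\, \expectation{S_2 \mid X}\big).
\end{equation*}
The third assumption ($S_1 \bot S_2 \mid X$) makes the first term vanish. Substituting the affine conditional means into the second term turns it into $(1-2a_1)(1-2a_2)\operatorname{Var}(X)$, and since $X$ is Bernoulli, $\operatorname{Var}(X) = \theta(1-\theta) = \p{X=1}\p{X=0}$. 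Hence the numerator equals $(1-2a_1)(1-2a_2)\,\p{X=1}\p{X=0}$.

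For the denominator I would use the elementary bit identity $1 - 2(s_1 \oplus s_2) = (1-2s_1)(1-2s_2)$, valid for $s_1,s_2 \in \{0,1\}$, to rewrite $1 - 2\expectation{S_1 \oplus S_2} = \expectation{(1-2S_1)(1-2S_2)}$. Conditioning on $X$ and invoking the third assumption factorizes this into $\expectation{1 - 2S_1 \mid X}\,\expectation{1-2S_2\mid X}$; the second assumption makes each factor equal to $(1-2a_i)(-1)^X$, whose product is the $X$-independent constant $(1-2a_1)(1-2a_2)$. This independence of $X$ is precisely the content of the first assumption, so either route yields the denominator $(1-2a_1)(1-2a_2)$. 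Dividing the two expressions then cancels the common factor and produces \eqref{eq:SO1}.

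The computation is routine once the conditioning is set up; the only genuine subtlety will be verifying that the three assumptions conspire to produce the \emph{same} factor $(1-2a_1)(1-2a_2)$ in numerator and denominator, which is what makes the nuisance constants $a_i$ drop out, and the small trick that keeps the denominator clean is the $\oplus$-to-product identity above. I would also flag the implicit nondegeneracy requirement $(1-2a_1)(1-2a_2) \neq 0$: if the elementary error failed to influence one of the syndrome bits then $a_i = 1/2$, both sides of \eqref{eq:SO1} would vanish, and the identity would be a vacuous $0/0$ — so the statement is really a claim about elementary errors that genuinely couple to both $S_1$ and $S_2$.
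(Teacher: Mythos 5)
Your proof is correct, and it reaches \eqref{eq:SO1} by a genuinely different organization of the computation than the paper's. The paper expands $\p{S_1=1,S_2=1}-\p{S_1=1}\p{S_2=1}$ via the law of total probability, splits the resulting double sum over $(X,X')$ into diagonal and off-diagonal parts, and then applies assumptions 3, 2 and 1 in sequence to transform the remaining bracket \emph{directly} into $1-2\expectation{S_1\oplus S_2}$, arriving at the product form $\operatorname{Cov}(S_1,S_2)=\p{X=1}\p{X=0}\left(1-2\expectation{S_1\oplus S_2}\right)$ which is then divided through. You instead compute numerator and denominator separately in terms of the nuisance parameters $a_i$ and exhibit the common factor $(1-2a_1)(1-2a_2)$: the law of total covariance packages the paper's regrouping into one line, and the identity $1-2(s_1\oplus s_2)=(1-2s_1)(1-2s_2)$ for bits handles the denominator cleanly. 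Your route buys two things. First, it makes visible that assumption 1 is redundant: your calculation shows $\pc{S_1=S_2}{X}$ is automatically independent of $X$ under assumptions 2 and 3 alone, whereas the paper invokes assumption 1 as a separate input in its last step (consistent with its remark that conditions 1 and 2 hold automatically for errors anti-commuting with both stabilizers). Second, your nondegeneracy caveat is a genuine sharpening: the paper proves the product form and declares it ``equivalent'' to \eqref{eq:SO1}, which silently requires $1-2\expectation{S_1\oplus S_2}\neq 0$, i.e.\ $a_1,a_2\neq 1/2$; you correctly note that otherwise both numerator and denominator vanish and the quotient is ill-defined, so the stated equation only carries content for errors that genuinely couple to both syndrome bits.
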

The idea is that the correlation between $S_1$ and $S_2$ gives us the rate of the error $X$. Note that the first two conditions are automatically fulfilled for any error $X$ that anti-commutes with both $S_1$ and $S_2$. The third condition however is interesting. It essentially states that $X$ is the only elementary error in our noise model that affects both $S_1$ and $S_2$.
\begin{proof}[Proof of \autoref{lemma:SO-1}]
Since the syndromes are binary variables, we have
\begin{flalign*}
 &\expectation{S_1S_2} - \expectation{S_1}\expectation{S_2}   \\
 = &\p{S_1 = 1, S_2 = 1} - \p{S_1 = 1}\p{S_2 = 1}  \\
\end{flalign*}
This can be rewritten using the law of total probability.
\begin{flalign*}
 = &\sum_{X = 0,1} \pc{S_1 = 1, S_2 = 1}{X}\p{X} \\
   &- \sum_{X,X\primed = 0,1}\pc{S_1 = 1}{X}\pc{S_2 = 1}{X\primed}\p{X}\p{X\primed} 
    \\
\end{flalign*}
Now we regroup the second term.
\begin{flalign*}
 = &\sum_{X = 0,1} \pc{S_1 = 1, S_2 = 1}{X}\p{X} \\
 &- \sum_{X,X\primed = 0,1 | X = X\primed}\pc{S_1 = 1}{X}\pc{S_2 = 1}{X\primed}\p{X}\p{X\primed}  \\
 & - \sum_{X,X\primed = 0,1 | X \neq X\primed}\pc{S_1 = 1}{X}\pc{S_2 = 1}{X\primed}\p{X}\p{X\primed} \\
  = &\sum_{X = 0,1} \pc{S_1 = 1, S_2 = 1}{X}\p{X}  \\
  &- \sum_{X = 0,1}\pc{S_1 = 1}{X}\pc{S_2 = 1}{X}\p{X}\p{X} \\
  &- \sum_{X= 0,1}\pc{S_1 = 1}{X}\pc{S_2 = 1}{\bar X}\p{X}\p{\bar X } \\
\end{flalign*}
Finally, we use assumptions 1,2 and 3 to finish the calculation.
\begin{flalign*}
 =& \p{X=1}\p{\bar X = 1}\left(\sum_{X = 0,1}\pc{S_1 = 1}{X}\pc{S_2 = 1}{X} - \sum_{X = 0,1}\pc{S_1 = 1}{X}\pc{S_2 = 1}{\bar X}\right) \text{\quad (by 3)} \\
 =& \p{X = 1}\p{\bar X = 1}\left(\pc{S_1 = S_2}{X = 1} - \pc{S_1 \neq S_2}{X = 1}\right) \text{\quad (by 2 for the second term)}\\
 =& \p{X = 1}\p{\bar X = 1}\left(1 - 2\p{S_1 \neq S_2}\right) \text{\quad (by 1)}  \\ 
 =& \p{X = 1}\p{\bar X = 1}\left(1 - 2\expectation{S_1 \oplus S_2}\right)
\end{flalign*}
where we also used $\p{\bar X} = 1 - \p{X}$.
This is equivalent to \eqref{eq:SO1}.
\end{proof}

Errors that only affect a single stabilizer can be estimated once the other rates are known, using the following proposition,
\begin{proposition}
\label{lemma:SO-2}
Let $S$ be a stabilizer and let $\{X_1,\dots,X_k\}$ be the set of all elementary errors in our noise model that anti-commute with $S$. Then,
\begin{equation}
\prod_{i = 1}^k (1 - 2\p{X_i = 1}) = (1 - 2\expectation{S})
\end{equation}
\end{proposition}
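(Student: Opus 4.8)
The plan is to reduce the claim to a purely probabilistic identity about the parity (XOR) of independent binary variables, and then evaluate that parity with a sign transform. The crucial structural observation is that the syndrome bit $S$ equals the modulo-$2$ sum of the occurrence indicators of exactly those elementary errors that anti-commute with the generator associated to $S$. Indeed, commutation with a fixed generator $g$ defines an $\FF_2$-linear (symplectic) form on the Pauli group: the total error is the product of the elementary errors that actually occurred, and this product anti-commutes with $g$ precisely when an odd number of the occurred errors individually anti-commute with $g$. Since $\{X_1,\dots,X_k\}$ are by definition exactly the elementary errors anti-commuting with the generator of $S$, I would conclude, writing $X_i \in \{0,1\}$ for the occurrence indicator,
\[
S = X_1 \oplus \dots \oplus X_k \, .
\]

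Next I would evaluate this parity via the sign transform. For each Bernoulli variable one has $\expectation{(-1)^{X_i}} = (1 - \p{X_i = 1}) - \p{X_i = 1} = 1 - 2\p{X_i = 1}$. Because the elementary errors occur independently, the expectation of the product factorizes, so that
\[
\expectation{(-1)^{S}} = \expectation{\textstyle\prod_{i=1}^k (-1)^{X_i}} = \prod_{i=1}^k \expectation{(-1)^{X_i}} = \prod_{i=1}^k \left(1 - 2\p{X_i = 1}\right) \, .
\]
Finally, since $S$ is itself a binary variable, $\expectation{(-1)^{S}} = \p{S = 0} - \p{S = 1} = 1 - 2\p{S = 1} = 1 - 2\expectation{S}$, which is exactly the right-hand side of the claimed identity and completes the argument.

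The main obstacle is the first step: carefully justifying that the syndrome bit is the XOR of the anti-commuting indicators and nothing more. One must argue that pushing $g$ through a product of Pauli factors accumulates a sign equal to the product of the individual commutation signs, so that only the \emph{parity} of the number of anti-commuting factors matters, and that the (multi-qubit, possibly measurement-affecting) nature of the elementary errors does not spoil this, since the symplectic form remains $\FF_2$-linear regardless of how many sites each error touches. Once this reduction is in place, the remaining steps are routine, relying only on independence and the elementary identity $\expectation{(-1)^{X_i}} = 1 - 2\p{X_i = 1}$.
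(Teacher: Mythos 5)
Your proof is correct, and the computational core is genuinely different from the paper's. Both arguments rest on the same structural fact --- that the syndrome bit satisfies $S = X_1 \oplus \dots \oplus X_k$ --- which the paper simply invokes as ``by assumption, the outcome of measuring $S$ is completely determined by the errors $X_1,\dots,X_k$,'' whereas you take care to justify it via the $\FF_2$-linearity of the commutation form; that extra care is welcome and does not change the logic. Where you diverge is in deriving the product formula: the paper peels off one variable at a time, expands $\p{X_1 \oplus \bigoplus_{i\ge 2} X_i = 1}$ by conditioning, does a few lines of algebra to exhibit the factorization $1 - 2\p{\bigoplus_i X_i = 1} = (1-2\p{X_1=1})\bigl(1 - 2\p{\bigoplus_{i\ge 2} X_i = 1}\bigr)$, and closes by induction. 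You instead observe that $1 - 2\p{X=1} = \expectation{(-1)^X}$ and that $(-1)^{X_1\oplus\dots\oplus X_k} = \prod_i (-1)^{X_i}$, so independence immediately factorizes the expectation --- the standard sign-transform (bias) argument. The two are mathematically the same identity (the paper's induction is precisely tracking the quantity $\expectation{(-1)^{\cdot}}$ without naming it), but your route replaces the inductive bookkeeping with a one-line appeal to multiplicativity of expectations of independent factors, which is shorter, less error-prone, and generalizes transparently to any number of summands.
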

\begin{proof}
By assumption, the outcome of measuring $S$ is completely determined by the errors $X_1,\dots,X_k$. Therefore,
\begin{flalign*}
&(1 - 2\expectation{S}) \\
= &1 - 2\p{X_1 \oplus \dots \oplus X_k = 1} \\
\end{flalign*}
Since the elementary errors are independent we can factor out one of them.
\begin{flalign*}
= &1 - 2\left(\p{X_1 = 1}\p{\bigoplus_{i = 2}^k X_i = 0} + \p{X_1 = 0}\p{\bigoplus_{i = 2}^k X_i = 1}\right)\\
\end{flalign*}
Applying $\p{X_1 = 0} = 1 - \p{X_1 = 1}$ and doing some algebra leads to
\begin{flalign*}
= &1 - 2\left(\p{X_1 = 1}(1 - \p{\bigoplus_{i = 2}^k X_i = 1}) + (1 - \p{X_1 = 1})\p{\bigoplus_{i = 2}^k X_i = 1} \right) \\
=& 1 - 2\left(\p{X_1 = 1}(1 - 2\p{\bigoplus_{i = 2}^k X_i = 1}) + \p{\bigoplus_{i = 2}^k X_i = 1} \right) \\
=& 1 - 2\p{X_1 = 1} + 4\p{X_1 = 1}\p{\bigoplus_{i = 2}^k X_i = 1} -2\p{\bigoplus_{i = 2}^k X_i = 1} \\
=& \left(1 -2\p{X_1 = 1}\right) \left(1 - 2\p{\bigoplus_{i = 2}^k X_i = 1}\right)
\end{flalign*}
The claim now follows by induction.
\end{proof}
If e.g.\ the rates of $X_2,\ldots,X_k$ are already determined by using the estimation from the previous section, \autoref{lemma:SO-2} can be used to estimate $X_1$.

The estimation using propositions \ref{lemma:SO-1} and \ref{lemma:SO-2} is in closed form, however there are some limitations.
 First of all, the assumption of binary noise is relatively restrictive. For example, such a model does not include the commonly used depolarizing noise, since the probability of a Pauli $Y$ error is not the product of the probabilities of  $X$ and $Z$ errors.
 It is possible to work around this problem to some extent by modeling depolarizing noise as independent $X$,$Z$ and $Y$ errors with some effective rates, which works for low error rates.
 The second problem is that one only considers correlations between pairs of stabilizers, but not higher order correlations.
 This is generally not sufficient to fully characterize a code.
 For example, considering the well known 5 qubit code subject to independent Pauli noise on each qubit, there are 15 parameters to be estimated (the probabilities of each of the 3 non-trivial Pauli errors for each of the 5 qubits), while the two propositions provide at best $\binom{4}{2} + 4 = 10$ equations.
 However, we have shown in the main text that it is possible to estimate error rates of this code at least in certain parameter regimes (\autoref{lemma:PerfectCodeIdent}).
 Furthermore, \autoref{lemma:SO-1} requires that one can find pairs of stabilizers that are only correlated by a single elementary error.
 It is not always possible to find such pairs. As an example, consider the 7 qubit Steane code subject to only independent Pauli $X$ errors on each qubit.
 The stabilizers of this code are illustrated in \autoref{fig:7QubitCodeGraph}.
\begin{figure}[tb]
\newlength{\nodedist}
\setlength{\nodedist}{4cm}
\newlength{\triangleheight}
\setlength{\triangleheight}{3.46cm} 
\centering
\begin{tikzpicture}[node distance=\nodedist]
\tikzstyle{error}=[circle,thick,draw=black!100,fill=white!100,minimum size=4mm]
\tikzstyle{stabilizer}=[rectangle,thick,draw=black!100, fill=white!100,minimum size=6mm]
\tikzstyle{edge}=[very thick]

\node [stabilizer](s1){$S_1$};
\node [error,shift=({0:0.5\nodedist})s1](e1){$X_6$};
\node[stabilizer,shift=({0:\nodedist})e1](s2){$S_2$};
\node[error,shift=({60:0.5\nodedist})s1](e3){$X_5$};
\node[stabilizer,shift=({60:\nodedist})s1](s3){$S_3$};
\node[error] at ([shift=({120:0.5\nodedist})]s2)(e2){$X_3$};
\node[error] at ([shift=({90:0.5\triangleheight})]e1)(e4){$X_7$};
\node[error] at ([shift=({90:0.5\nodedist/2})]s3)(e5){$X_1$};
\node[error] at ([shift=({-45:0.5\nodedist/2})]s2)(e6){$X_2$};
\node[error] at ([shift=({225:0.5\nodedist/2})]s1)(e7){$X_4$};

\draw[edge](s1) to (e1);
\draw[edge](s1) to (e3);
\draw[edge](s1) to (e7);
\draw[edge](s1) to (e4);

\draw[edge](s2) to (e1);
\draw[edge](s2) to (e2);
\draw[edge](s2) to (e6);
\draw[edge](s2) to (e4);

\draw[edge](s3) to (e2);
\draw[edge](s3) to (e3);
\draw[edge](s3) to (e5);
\draw[edge](s3) to (e4);

\end{tikzpicture}
\caption{Errors $X_i$ and stabilizers $S_i$ for the 7 qubit Steane code with only $X$ errors (only the 3 relevant stabilizers are shown). A connection between an error and a stabilizer means that they anti-commute.}
\label{fig:7QubitCodeGraph}
\end{figure}
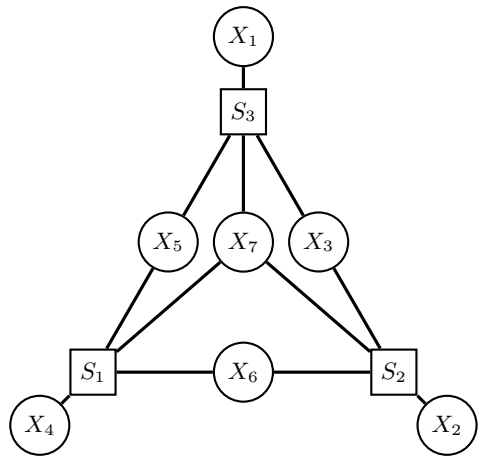
We see that because of the central error node $X_7$, there are no two stabilizers that are connected only through a single elementary error. Therefore we cannot apply \autoref{lemma:SO-1} here. However, \autoref{lemma:PerfectCodeIdent} implies that parameters of this model are identifiable at least in a certain regime. Note that similar problems occur for color codes, since the Steane code is the smallest example of a color code \cite{Bombin_ColorCodes}.

\subsection{Generalized Identifiability Results}
\label{sec:GeneralizedIdentResults}
In this section, we provide generalized versions of \autoref{lemma:LowRatesIdent} and \autoref{lemma:GeneralIdentifiability} as well as their proof. Furthermore, we provide the proof of \autoref{lemma:PerfectCodeIdent}.
\subsubsection{Formal Definition of Error Model}
\label{sec:FormalErrorModel}
We consider a quite general error model that includes independent single qubit Pauli noise as a special case. There are two main underlying assumptions. The first is that errors on the data qubits and syndrome bits are stochastic Pauli errors and bit-flips, which is common in the treatment of quantum error correction codes. The second is that there is some independence between different kinds of errors, which is both of fundamental importance for error correction and often physically reasonable. The first assumption implies that errors can be modeled as elements of the  group $\errset{} = \mathsf P_n \times  \Synset$, where the first component represents a Pauli error on the data and the second component represents a bit-flip on the measured syndrome. The product in this group is thus $(\bm e,\bm f), (\bm e\primed,\bm f\primed) \mapsto (\bm e \bm e\primed, \bm f \oplus \bm f\primed)$ and the identity element is $I = (I_{\mathsf P_n},\bm 0)$. The syndrome of $(\bm e,\bm f) \in \errset$ is $\Sb(\bm e) \oplus \bm f$.
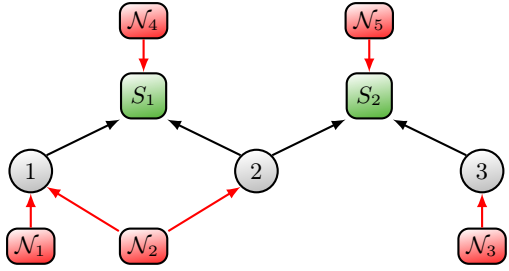
\begin{figure}[t]
\centering
\newlength{\horizontalnodedist}
\setlength{\horizontalnodedist}{3cm}
\newlength{\verticalnodedist}
\setlength{\verticalnodedist}{2cm}
\colorlet{mygreen}{OliveGreen!90!blue}%
\tikzset{
  sa/.style = {shading = axis,shading angle=10},
  blau/.style = {sa,top color=niceblue!12,%
                 bottom color=niceblue!88, sa},
  rot/.style = {top color=red!5,%
            bottom color=red!80,sa},
  gruen/.style = {top color=mygreen!5,%
            bottom color=mygreen!60,sa},
  hellgruen/.style = {top color=mygreen!1,%
            bottom color=mygreen!20},
  hellgrau/.style = {top color=gray!0,%
             bottom color=gray!50,sa},
  lila/.style = {top color=violet!3,%
             bottom color=violet!50,sa},
  gelb/.style = {top color=Dandelion!3,%
            bottom color=Dandelion!50,sa},
}
\begin{tikzpicture}[node distance =  \nodedist,on grid,
            rounded corners,
            ]
\tikzstyle{qubit}=[circle,thick,draw=black!100,hellgrau,minimum size=2mm,anchor=center]
\tikzstyle{stabilizer}=[rectangle,thick,draw=black!100, gruen,minimum size=6mm,anchor=center]
\tikzstyle{error}=[rectangle,thick,draw=black!100,rot,minimum size=2mm,anchor=center]
\tikzstyle{edge}=[thick]
\tikzstyle{arrow}=[->, >= latex, thick]
\tikzstyle{arrowr}=[arrow, red]
\node[qubit](q1){1};
\node[qubit](q2)[right = \horizontalnodedist of q1]{2};
\node[qubit](q3)[right = \horizontalnodedist of q2]{3};
\node[stabilizer](s1)[above right = 0.5\verticalnodedist and 0.5\horizontalnodedist of q1]{$S_1$};
\node[stabilizer](s2)[above right = 0.5\verticalnodedist and 0.5\horizontalnodedist of q2]{$S_2$};
\node[error](e1)[below = 0.5\verticalnodedist of q1]{$\mathcal{N}_1$};
\node[error](e3)[below = 0.5\verticalnodedist of q3]{$\mathcal{N}_3$};
\node[error](e12)[below right = 0.5\verticalnodedist and 0.5\horizontalnodedist of q1]{$\mathcal{N}_2$};
\node[error](es1)[above = 0.5\verticalnodedist of s1]{$\mathcal{N}_4$};
\node[error](es2)[above = 0.5\verticalnodedist of s2]{$\mathcal{N}_5$};

\draw[arrow](q1.north east) -- (s1.south west);
\draw[arrow](q2.north west) -- (s1.south east);
\draw[arrow](q2.north east) -- (s2.south west);
\draw[arrow](q3.north west) -- (s2.south east);

\draw[arrowr](e1.north) -- (q1.south);
\draw[arrowr](e3.north) -- (q3.south);
\draw[arrowr](e12) -- (q1.south east);
\draw[arrowr](e12) -- (q2.south west);
\draw[arrowr](es1) -- (s1);
\draw[arrowr](es2) -- (s2);
\end{tikzpicture}
\caption{Representation of a simple decomposable error model on the repetition code. The circles represent the 3 qubits of the code. The green boxes represent the two stabilizer generators $S_1 = Z \otimes Z \otimes I$ and  $S_2 = I \otimes Z \otimes Z$. The noise model decomposes into channels that act independently, as illustrated by the red boxes. For example, the channel $\mathcal{N}_1$ applies an $X$ error to the first qubit with some probability $\theta^1_{X \otimes I \otimes I}$. $\mathcal{N}_2$ applies the error $X \otimes X \otimes I$ with probability $\theta^2_{X \otimes X \otimes I}$ and the error $I \otimes X \otimes I$ with probability $\theta^2_{I \otimes X \otimes I}$. $\mathcal{N}_4$ flips the outcome of the measurement of $S_1$ with probability $\theta^4_{(1,0)}$.}
\label{fig:ErrorModelExample}
\end{figure}
\begin{definition}[decomposable error model]$ $ \newline
\label{def:DecomposablePauliChannel}
Let ${N_1,\dots,N_m \subset \errset}$ be disjoint error sets and ${I \notin N_i, \: \forall i \in \{1,\dots,m\}}$. For each ${i \in \{1,\dots,m\}}$, let ${\btheta^i = (\pei)_{\bm e \in N_i \cup \{I\}}}$
be a probability vector over $N_i \cup \{I\}$, and define ${\btheta = (\pei)_{i \in \{1,\dots,m\}, \bm e \in N_i}}$ by grouping together all these probability vectors and excluding the rates of trivial errors.
An error model is \emph{decomposable} with error sets $N_1,\dots,N_m$ and parameters $\bm \theta$ if errors from the different sets occur independently, i.e. the probability of a given error combination ${\bm X \in ((N_1 \cup \{I\}) \times \dots \times (N_m \cup \{I\}))}$ is \begin{equation}
\p{\bm X} = \prod_{i=1}^m \prod_{\bm e \in N_i} (\peia{\bm e}{i})^{\kron{X_i}{\bm e}}(\peia{I}{i})^{\kron{X_i}{I}}\, ,
\label{eq:p(X)}
\end{equation}
where $\delta$ is the Kronecker-delta and $\theta^i_I = 1 - \sum_{\bm e \in N_i} \theta^i_{\bm e}$.
\end{definition}
An example of such a model is given in \autoref{fig:ErrorModelExample}. 
There, the error sets would be $N_1 = \{X \otimes I \otimes I\},N_2 = \{X \otimes X \otimes I, I \otimes X \otimes I\},N_3 = \{I \otimes I \otimes X\},N_4 = \{(1,0)\},N_5 = \{(0,1)\}$.
 (Since errors here either only act on data qubits or only on syndrome bits we omitted the other trivial part of the errors.)
 For independent single-qubit Pauli noise the error sets would be given by $N_i = \{\bm X\exti{i},\bm Z\exti{i},\bm Y\exti{i}\}$.
 We will refer to the elements of the individual error sets as \emph{elementary errors}.
 Since there can be overlap between the supports of the different error channels, we often consider the vector $\bm X \in ((N_1 \cup \{I\}) \times \dots \times (N_m \cup \{I\}))$, containing all the elementary errors that occurred.
 The combined error $\bm E \in \mathsf P_n$ on the qubits and syndrome bits is then the product of all elementary errors that occurred, i.e. $\bm E = \prod_i X_i$.
 For independent single qubit Pauli noise $\bm X$ and $\bm E$ coincide.
 The map $\Mmap: \btheta \mapsto (\p{\Sb})_{\Sb \in \Synset}$ introduced in \autoref{sec:GeneralIdentifiabilityConditions} can now be written as
\begin{equation}
\p{\bm S} = \sum_{\bm X : \bm S(\bm X) = \bm S} \p{\bm X} \; .
\end{equation}
 Our identifiability conditions can now be straightforwardly generalized by considering the elementary errors as the new "single qubit errors". We also note that in the presence of measurement errors, it might be appropriate to include redundant stabilizer measurements such that the length $l$ of a syndrome is larger than the number of stabilizer generators \cite{Fujiwara_DataSyndromeCodes,Ashikhmin_QuantumDataSyndromeCodes,Delfosse_BeyondSingleShotFaultTolerance}. Our results also apply to such a scheme.
\subsubsection{Proof of Proposition \ref{lemma:LowRatesIdent}}
\label{sec:ProofLowRatesIdent}
Explicitly, \autoref{lemma:LowRatesIdent} is generalized as follows:
\begin{proposition}
\label{lemma:LowRatesIdent-general}
Consider a quantum code subject to a decomposable error model with error sets $N_1,\dots,N_m$ and parameters $\bm \theta$. Then the parameters of the channel are locally identifiable at $\bm \theta = \bm 0$ if and only if $\bm S (\bm e) \neq \bm S(\bm e \primed)$ for every choice of two different elementary errors $\bm e, \bm e\primed \in \bigcup_{i=1}^m N_i$.
\end{proposition}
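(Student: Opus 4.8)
The plan is to reduce local identifiability to a rank condition on the Jacobian $J = \dtheta\Mmap$ evaluated at $\btheta = \bm 0$, exactly as in the single-qubit case. By the inverse function theorem (as already noted in \autoref{sec:GeneralIdentifiabilityConditions}), local identifiability at $\bm 0$ holds iff $J$ has full column rank there. The columns are indexed by the parameters $\peia{\bm f}{j}$, one for each elementary error $\bm f \in N_j$, and the rows by syndromes $\Sb \in \Synset$, so the task is to compute these columns explicitly and to decide when they are linearly independent.

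First I would differentiate $\p{\Sb} = \sum_{\bm X : \Sb(\bm X) = \Sb} \p{\bm X}$ term by term, using the product form \eqref{eq:p(X)} of $\p{\bm X}$. For a fixed parameter $\peia{\bm f}{j}$ the product rule gives $\partial \p{\bm X}/\partial \peia{\bm f}{j} = \big(\prod_{i \neq j} P_i(X_i)\big)\, \partial_{\peia{\bm f}{j}} P_j(X_j)$, where $P_i(X_i)$ denotes the $i$-th factor in \eqref{eq:p(X)}. The key simplification is that we evaluate at $\btheta = \bm 0$: there $P_i(I) = 1$ while $P_i(\bm e) = 0$ for every elementary error $\bm e$, so the prefactor $\prod_{i \neq j} P_i(X_i)$ vanishes unless $X_i = I$ for all $i \neq j$. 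Since $\partial_{\peia{\bm f}{j}} P_j(X_j)$ equals $+1$ when $X_j = \bm f$, equals $-1$ when $X_j = I$ (through the constraint $\theta^j_I = 1 - \sum_{\bm e} \theta^j_{\bm e}$), and vanishes otherwise, only two combinations $\bm X$ survive: the all-trivial combination $\bm X = I$ (contributing $-1$) and the combination with $X_j = \bm f$ and every other component trivial (contributing $+1$). The former has syndrome $\bm 0$ and the latter, since $\bm E = \prod_i X_i = \bm f$, has syndrome $\Sb(\bm f)$.

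Consequently the column of $J$ belonging to $\peia{\bm f}{j}$ has entries $J[\Sb, \peia{\bm f}{j}] = \kron{\Sb}{\Sb(\bm f)} - \kron{\Sb}{\bm 0}$, i.e.\ it is the difference of the indicator vectors of the syndromes $\Sb(\bm f)$ and $\bm 0$; notably it depends on $\bm f$ only through $\Sb(\bm f)$ and not on which set $N_j$ the error belongs to. To finish I would argue linear independence. The normalization $\sum_{\Sb} \p{\Sb} = 1$ forces the rows of $J$ to sum to zero, so deleting the $\bm 0$-row leaves the column rank unchanged; after this deletion the column of $\peia{\bm f}{j}$ becomes the plain indicator of $\Sb(\bm f)$ (or the zero vector when $\Sb(\bm f) = \bm 0$). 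These reduced columns are linearly independent precisely when the syndromes $\{\Sb(\bm f)\}_{\bm f}$ are pairwise distinct and nonzero, which is exactly the stated requirement that distinct elementary errors have distinct syndromes. Both directions then follow: if the condition holds the surviving indicators are distinct unit vectors, hence independent; if it fails, two columns coincide (or one is zero) and the rank drops.

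The main obstacle is the bookkeeping in the second step: establishing rigorously that at $\btheta = \bm 0$ only the ``single elementary error'' events contribute to each derivative, in spite of the possibly overlapping supports of the channels $N_i$ and the fact that the combined error $\bm E = \prod_i X_i$ may telescope on shared qubits. Evaluating at the origin is what tames this, since any combination containing two or more elementary errors carries at least two vanishing factors and so drops out to first order. A secondary subtlety to treat cleanly is the degenerate case $\Sb(\bm f) = \bm 0$, a nontrivial elementary error commuting with all checks: such a rate is manifestly unidentifiable, and the $\bm 0$-row deletion is precisely what exposes its column as zero.
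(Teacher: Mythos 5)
Your proposal is correct and follows essentially the same route as the paper: differentiate $\p{\bm X}$ at $\btheta = \bm 0$ so that only the all-trivial configuration and the single-elementary-error configurations survive, obtain columns of the form $\bm u_{\Sb(\bm f)} - \bm u_{\bm 0}$, and read off full column rank from distinctness of the syndromes (the paper merely packages this as the composition $\Smap \circ \dtheta\Emap$ with $\Smap$ linear, while you differentiate $\p{\Sb}$ directly and remove the $\bm 0$-row; these are equivalent). Your explicit remark about the degenerate case $\Sb(\bm f) = \bm 0$ is, if anything, slightly more careful than the paper's own treatment.
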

\begin{proof}
We have to show that the map $\Mmap$ defined in \autoref{sec:IdentifiabilityConditions} is locally invertible at $\bm 0$.
The probability of the error $\Eb \in \errset$ is
\begin{equation}\label{eq:P[E]}
\p{\bm E} = \sum_{\substack{\bm X : \bm E(\bm X) = \bm E}} \p{\bm X} \; ,
\end{equation}
where $\p{\bm X}$ is given in \eqref{eq:p(X)}.
The probability of observing syndrome $\Sb$ is
\begin{equation}
\p{\bm S} = \sum_{\bm E \in \errset : \bm S(\bm E) = \bm S} \p{\bm E} \, .
\end{equation}
Thus the map $\Mmap$ decomposes as $\Mmap = \Smap \circ \Emap$, where
\begin{equation}
\begin{aligned}
&\Emap: \bm \theta \mapsto (\p{\bm E})_{\bm E \in \errset} \, ,
\end{aligned}
\end{equation}
describes the distribution of total errors, and
\begin{equation}
\begin{aligned}
&\Smap: (\p{\bm E})_{\bm E \in \errset} \mapsto (\p{\bm S})_{\bm S \in \Synset}  \, .
\end{aligned}
\end{equation}
describes the probability of each syndrome. Since $\Smap$ is linear, we have
\begin{equation}
\begin{aligned}
J \coloneqq \dtheta \Mmap &= D_{\Emap(\bm \theta)}\Smap \circ \dtheta \Emap \\
&= \Smap \circ \dtheta \Emap \, .
\end{aligned}
\end{equation}

We begin by calculating the derivative of $\p{\bm X}$.
\begin{equation}
\partialderiv{\p{\bm X}}{\pei}
= \kron{X_i}{\bm e}\p{\bm X_{-i}} - \kron{X_i}{I}\p{\bm X_{-i}} \, ,
\end{equation}
where $\bm X_{-i}$ denotes $\bm X$ without the $i$th component. Since we consider $\btheta = 0$, $\p{\bm X_{-i}}$ is zero if $X_j \neq I$ for any $i \neq j$. Thus
\begin{align}
\partialderiv{\p{\bm X}}{\pei} = \begin{cases} +1, & X_i = \bm e \text{ and } X\resti{j} = I \; \forall j \neq i \\ -1, & X_i = I \text{ and } X\resti{j} = I \; \forall j \neq i \\ 0, & \text{otherwise} \end{cases}
\end{align}
We then have
\begin{equation}
\begin{aligned}
D_{(\btheta = \bm 0)} \Emap[\bm E, \pei] &= \partialderiv{\p{\bm E}}{\pei} = \sum_{\bm X : \bm E(\bm X) = \bm E} \partialderiv{\p{\bm X}}{\pei}  \\
&= \begin{cases} +1, & \bm E = \bm e \\ -1, & \bm E = I \\ 0, & \text{otherwise} \end{cases}
\end{aligned}
\end{equation}
where the last line follows because there is always at most one non-zero summand, since the different error sets are by definition disjoint. Therefore the derivative of $\Emap$ has a very simple form:
\begin{equation}
D_{(\btheta = \bm 0)} \Emap = 
\begin{pmatrix}
-1 & \dots & -1\\
 \bm u_{\bm e_1} & \dots & \bm u_{\bm e_k}
\end{pmatrix} \, ,
\end{equation}
where $\bm u_{\bm e_i}$ denotes the unit vector associated to the corresponding elementary error $\bm e_i$, and ${k = \sum_{i = 1}^m |N_i|}$. Since the error sets $N_1,\dots,N_m$ are disjoint, i.e.\ there are no duplicate elementary errors, this matrix has $k$ independent columns and thus full column rank. As long as no two elementary errors have the same syndrome, the images of these columns under $\Smap$ are again linearly independent. Then $D_{(\btheta = \bm 0)} \Mmap = \Smap \circ D_{(\btheta = \bm 0)} \Emap$ has full column rank, and the inverse function theorem completes the proof.
\end{proof}

\subsubsection{Proof of Theorem \ref{lemma:GeneralIdentifiability}}
\label{sec:ProofGeneralIdentifiability}
Our general version of \autoref{lemma:GeneralIdentifiability} is:
\begin{theorem}
\label{lemma:GeneralIdentifiability-general}
Consider a quantum code subject to a decomposable error model with error sets $N_1,\dots,N_m$ and parameters $\bm \theta$.
Assume that $\theta^i_{\bm e} > 0$ for all $i$ and all $\bm e \in N_i$, and that $\p{\bm S} > 0$ for all syndromes $\bm S \in \Synset$. 
Then the error rates are locally identifiable at $\bm \theta$ if and only if the matrix $\tilde{J}$ with entries
\begin{equation}
\tilde{J}[\bm S, \pei] = \frac{\p{X_i = \bm e | \bm S}}{\p{X_i = \bm e}} - \frac{\p{X_i = I | \bm S}}{\p{X_i = I}}
\end{equation}
has full column rank.
\end{theorem}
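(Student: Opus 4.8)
The plan is to follow the same route as the proof of \autoref{lemma:LowRatesIdent-general}: by the inverse function theorem, local identifiability at $\btheta$ is equivalent to the Jacobian $J = \dtheta \Mmap$ having full column rank, so it suffices to compute $J$ and compare its rank to that of $\tilde{J}$. The decomposition $\Mmap = \Smap \circ \Emap$ and the linearity of $\Smap$ again give $J = \Smap \circ \dtheta \Emap$, but now we evaluate at a general $\btheta$ rather than at $\bm 0$, so many error configurations $\bm X$ contribute to each syndrome and the simple unit-vector structure of the low-rate case is lost.

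First I would differentiate $\p{\bm S} = \sum_{\bm X : \bm S(\bm X) = \bm S} \p{\bm X}$ directly. Using the product form \eqref{eq:p(X)} of $\p{\bm X}$ together with the constraint $\theta^i_I = 1 - \sum_{\bm e \in N_i}\theta^i_{\bm e}$, the derivative of a single configuration is $\partialderiv{\p{\bm X}}{\pei} = (\kron{X_i}{\bm e} - \kron{X_i}{I})\,\p{\bm X_{-i}}$, exactly as in the low-rate computation, where $\bm X_{-i}$ omits the $i$-th component. The essential new step is to recognise the resulting sums as conditional probabilities: writing $\p{\bm X} = \theta^i_{X_i}\,\p{\bm X_{-i}}$ and using the positivity of the rates $\theta^i_{\bm e} > 0$ lets me identify $\sum_{\bm X : \bm S(\bm X) = \bm S,\, X_i = \bm e}\p{\bm X_{-i}}$ with $\pc{\bm S}{X_i = \bm e}$, and likewise for the $X_i = I$ term, so that $\partialderiv{\p{\bm S}}{\pei} = \pc{\bm S}{X_i = \bm e} - \pc{\bm S}{X_i = I}$.

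Next I would apply Bayes' theorem, $\pc{\bm S}{X_i = \bm e} = \p{X_i = \bm e | \bm S}\,\p{\bm S}/\p{X_i = \bm e}$, to both terms. This yields $J[\bm S, \pei] = \p{\bm S}\,\tilde{J}[\bm S,\pei]$, i.e.\ $J$ is obtained from $\tilde{J}$ by left-multiplication with the diagonal matrix whose diagonal entries are $\p{\bm S}$, $\bm S \in \Synset$. The assumption $\p{\bm S} > 0$ for all $\bm S$ makes this diagonal matrix invertible, so $J$ and $\tilde{J}$ have equal column rank; in particular one has full column rank if and only if the other does, which combined with the inverse function theorem gives the claimed equivalence.

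I expect the main obstacle to be the bookkeeping in the middle step, namely correctly turning the derivative of the product distribution into a difference of conditional probabilities. The point is that differentiating with respect to $\pei$ isolates the $i$-th factor, and the remaining sum over configurations of all other error sets reassembles into $\pc{\bm S}{X_i = \bm e}$. The positivity of every rate is exactly what is needed to perform this factorisation and the subsequent Bayesian inversion, while the positivity of $\p{\bm S}$ is what lets me strip off the diagonal factor without changing the rank.
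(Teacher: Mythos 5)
Your proposal is correct and follows essentially the same route as the paper: the paper differentiates the log-likelihood $\ln(\p{\bm S})$ so that the $1/\p{\bm S}$ factor appears via the chain rule, whereas you differentiate $\p{\bm S}$ directly and then strip off the invertible diagonal matrix $\mathrm{diag}(\p{\bm S})$ — the same computation phrased two ways. Your identification of $\sum_{\bm X : \bm S(\bm X) = \bm S,\, X_i = \bm e}\p{\bm X_{-i}}$ with $\pc{\bm S}{X_i = \bm e}$ is exactly the step the paper performs when it rewrites $\p{\bm X_{-i}}$ as $\p{\bm X}/\p{X_i = \bm e}$ and resums.
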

\begin{proof}
We have to show that the map $\Mmap$ defined in \autoref{sec:IdentifiabilityConditions} is locally invertible at $\btheta$.
Since we assume that the rates of all errors and syndromes are strictly greater than $0$, the $\Mmap$ will be locally invertible at $\bm \theta$ if and only if the entry-wise logarithm $\ln(\Mmap)$ is locally invertible at $\bm \theta$.
 Thus we consider the derivative of the log-likelihood $\ln(\p{\bm S})$ for each syndrome.
 Remember that the probability of a syndrome $\bm S$ can be expressed as
\begin{equation}
\p{\bm S} = \sum_{\bm X  : \bm S(\bm X) = \bm S} \p{\bm X} \, ,
\end{equation}
where $\p{\bm X}$ is given in \eqref{eq:p(X)}.
 As in the proof of \autoref{lemma:LowRatesIdent} we compute the derivative
\begin{align*}
\partialderiv{\p{\bm X}}{\pei}
&= \kron{X_i}{\bm e}\p{\bm X_{-i}} - \kron{X_i}{I}\p{\bm X_{-i}} \\
&= \kron{X_i}{\bm e}\frac{\p{\bm X}}{\pei} - \kron{X_i}{I}\frac{\p{\bm X}}{\peia{I}{i}} \\
&= \kron{X_i}{\bm e}\frac{\p{\bm X}}{\p{X_i = \bm e}} - \kron{X_i}{I}\frac{\p{\bm X}}{\p{X_i = I}} \, .
\end{align*}
Using the fact that
\begin{equation}
\p{\bm S} = \sum_{\bm X | \Sb(\bm X) = \Sb} \p{\bm X} \, ,
\end{equation}
we obtain
\begin{equation}
\begin{aligned}
&\partialderiv{\ln(\p{\bm S})}{\pei} \\
&= \frac{1}{\p{\bm S}} \sum_{\bm X  : \bm S(\bm X) = \bm S} \left( \kron{X_i}{\bm e}\frac{\p{\bm X}}{\p{X_i = \bm e}} - \kron{X_i}{I}\frac{\p{\bm X}}{\p{X_i = I}} \right) \\
&= \frac{1}{\p{\bm S}} \left(  \frac{1}{\p{X_i = \bm e}}\sum_{\bm X  : \bm S(\bm X) = \bm S}\kron{X_i}{\bm e}\p{\bm X} - \frac{1}{\p{X_i = I}}\sum_{\bm X  : \bm S(\bm X) = \bm S}\kron{X_i}{I}\p{\bm X} \right) \\
&= \frac{\p{X_i = \bm e | \bm S}}{\p{X_i = \bm e}} - \frac{\p{X_i = I | \bm S}}{\p{X_i = I}} \, .
\end{aligned}
\label{eq:LLDerivative}
\end{equation} 
By the inverse function theorem, this completes the proof.
\end{proof}

\subsection{Proof of Lemma \ref{lemma:PerfectCodeEqualCondProbs}}
\label{sec:ProofPerfectCodeEqualCondProbs}
We will now proof \autoref{lemma:PerfectCodeEqualCondProbs} in order to finish the proof of \autoref{lemma:PerfectCodeIdent}. Remember that the $i$-th tensor factor of $\bm E \in \mathsf P_n$ is denoted $E\resti{i}$. Furthermore, the Pauli acting as $e \in P_1$ on qubit $i$ and as the identity everywhere else is denoted $\bm e\exti{i}$. Finally, for $\bm E \in \mathsf P_n$, we use $\Ebi{i}$ as a shorthand for $(E\resti{i})\exti{i}$.
We define the weight of a Pauli error in the standard way.
\begin{definition}[weight]
The \emph{weight} of a Pauli error $\Eb = \Ei{1} \otimes \Ei{2} \otimes \dots \otimes \Ei{n} \in \mathsf P_n$ 
is defined as
\begin{equation}
\wt(\Eb) \coloneqq |\{ \Ei{i} \mid \Ei{i} \neq I, i \in \{1,\dots,n\}\}| \, .
\end{equation}
\end{definition}
In the case of equal error rates $p$, the probability of an error is determined by its weight. Let us denote $\pbar \coloneqq 1 - p$. We obtain a convenient expression for $\PeiS{e}$. 
For $e \neq I$ we have
\begin{equation}
\begin{aligned}
\PeiS{e}
&= \sum_{\substack{\bm E \in \mathsf P_n:\\ \bm S(\bm E)= \bm S, E_i = e}} \p{\Eb}
\\
&= \sum_{\substack{\bm E \in \mathsf P_n:\\ \bm S(\bm E)= \bm S, E_i = e}} p^{\wt(\Eb)}(\pbar)^{n-\wt(\Eb)} \\
&= p\sum_{\substack{\bm E \in \mathsf P_n:\\ \bm S(\bm E)= \bm S, E_i = e}} p^{\wt(\Eb_{-i})}(\pbar)^{n-1-\wt(\Eb_{-i})}
\end{aligned}
\label{eq:p(e_i,S)_e}
\end{equation}
and, analogously, for $e = I$ 
\begin{equation}
\PeiS{e} = \pbar\sum_{\substack{\bm E \in \mathsf P_n:\\ \bm S(\bm E)= \bm S, E_i = e}} p^{\wt(\vec E_{-i})}(\pbar)^{n-1-\wt(\vec E_{-i})} \, .
\label{eq:p(e_i,S)_I}
\end{equation}
By the definition of conditional probability we obtain
\begin{equation}
\pc{\bm S}{E\resti{i} = e} = \sum_{\substack{\bm E \in \mathsf P_n:\\ \bm S(\bm E)= \bm S, E_i = e}} p^{\wt(\vec E_{-i})}(\pbar)^{n-1-\wt(\vec E_{-i})} \, .
\end{equation}
\autoref{lemma:PerfectCodeEqualCondProbs} is thus equivalent to the following lemma:
\begin{lemma}
\label{lemma:PerfectCodeEqualSums}
Consider a perfect single error correcting code on $n$ qubits. Let $e,e\primed \in \mathsf P_1$. Then for any syndrome $\Sb \in \Synset \setminus \{\bm 0\}$, error rate $p \in [0,1]$ and qubit $i$ such that $\Sb \neq \Sb(\eei{i})$ and $\Sb \neq \Sb((\bm e\primed)\exti{i})$
the following equality holds:
\begin{equation}\label{eq:sums_eq}
\sum_{\substack{\bm E \in \mathsf P_n:\\ \bm S(\bm E)= \bm S, E_i = e}}p^{\wt(\Eb_{-i})}(\pbar)^{n-1-\wt(\Eb_{-i})} = \sum_{\substack{\bm E \in \mathsf P_n:\\ \bm S(\bm E)= \bm S, E_i = e\primed}} p^{\wt(\Eb_{-i})}(\pbar)^{n-1-\wt(\Eb_{-i})}
\end{equation}
\end{lemma}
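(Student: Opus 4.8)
The plan is to read the two sides of \eqref{eq:sums_eq} as values of a single generating function and to reduce the claim to an equality of coset weight enumerators. For a target syndrome $\vec T\in\Synset$ introduce the restricted generating function
\begin{equation}
Q(\vec T) \coloneqq \sum_{\substack{\vec F\in\mathsf P_n:\,F_i=I\\ \Sb(\vec F)=\vec T}} p^{\wt(\vec F)}\,\pbar^{\,n-1-\wt(\vec F)} .
\end{equation}
Writing each $\Eb$ in \eqref{eq:sums_eq} as $\eei{i}\vec F$ with $F_i=I$ and using that $\Sb$ is a homomorphism, the left- and right-hand sides of \eqref{eq:sums_eq} are exactly $Q(\vec T)$ and $Q(\vec T')$ with $\vec T=\Sb\oplus\Sb(\eei{i})$ and $\vec T'=\Sb\oplus\Sb((\bm e\primed)\exti{i})$. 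The hypotheses $\Sb\neq\Sb(\eei{i})$ and $\Sb\neq\Sb((\bm e\primed)\exti{i})$ say precisely that $\vec T,\vec T'\neq\bm 0$, while $\vec T\oplus\vec T'=\Sb((\bm e\bm e\primed)\exti{i})\in H_i$, where $H_i\coloneqq\{\Sb(a\exti{i}):a\in\mathsf P_1\}$ is an order-four subgroup of $\Synset$. Thus $\vec T$ and $\vec T'$ lie in the same coset of $H_i$, and it suffices to show $Q(\vec T)=Q(\vec T')$ as a function of $p$.

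Next I would relate $Q$ to the unconstrained coset enumerator $C(\vec T)\coloneqq\sum_{\Eb:\Sb(\Eb)=\vec T}p^{\wt(\Eb)}\pbar^{\,n-\wt(\Eb)}$ by conditioning on the Pauli $E_i$ on qubit $i$. Splitting that sum according to $E_i=I$ or $E_i=a\in\{X,Y,Z\}$ and factoring off qubit $i$ gives the recursion
\begin{equation}
C(\vec T)=\pbar\,Q(\vec T)+p\sum_{a\in\{X,Y,Z\}}Q\!\left(\vec T\oplus\Sb(a\exti{i})\right).
\end{equation}
Since $\{\vec T\oplus\Sb(a\exti{i}):a\in\{X,Y,Z\}\}$ is the $H_i$-coset of $\vec T$ with $\vec T$ removed, this reads $C(\vec T)=(1-2p)\,Q(\vec T)+p\,P(\vec T)$ with $P(\vec T)\coloneqq\sum_{\vec U\in\vec T\oplus H_i}Q(\vec U)$, a quantity that is constant on each $H_i$-coset. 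Because $\vec T$ and $\vec T'$ share a coset, $P(\vec T)=P(\vec T')$, and subtracting the two instances yields $C(\vec T)-C(\vec T')=(1-2p)\big(Q(\vec T)-Q(\vec T')\big)$. As $1-2p$ is not the zero polynomial, the desired identity $Q(\vec T)=Q(\vec T')$ is equivalent to $C(\vec T)=C(\vec T')$.

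It then remains to prove $C(\vec T)=C(\vec T')$, i.e.\ that the two nonzero-syndrome cosets have identical weight enumerators. This is exactly where the perfect-code hypothesis becomes indispensable: by definition every nonzero syndrome has a unique weight-one representative, so each nonzero-syndrome coset carries a unique coset leader of weight one, and such one-error-correcting codes are completely regular, whence all these cosets share a single weight enumerator; in particular $C(\vec T)=C(\vec T')$. I expect this last step to be the main obstacle, since the reduction above is purely formal whereas the equality of coset weight enumerators encodes the full combinatorial symmetry of the perfect code (equivalently, it would follow from any argument showing that the weight distribution of a weight-one-leader coset is independent of the leader). A self-contained derivation of this fact—for instance through the MacWilliams transform of the stabilizer weight enumerator, exploiting the perfect-code constraint $2^{l}=3n+1$—would close the argument.
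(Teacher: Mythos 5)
Your reduction is correct and genuinely different from the paper's argument: rewriting both sides of \eqref{eq:sums_eq} as $Q(\vec T)$ and $Q(\vec T')$ with $\vec T,\vec T'$ nonzero and lying in the same coset of the order-four subgroup $H_i=\{\Sb(\bm a\exti{i}):a\in\mathsf P_1\}$, and then using $C(\vec T)=(1-2p)Q(\vec T)+pP(\vec T)$ with $P$ constant on $H_i$-cosets, is a clean and valid way to trade the lemma for the coarser statement that all nonzero-syndrome cosets have the same weight enumerator. The paper takes a different route: it works directly with the refined counts $k_w(e,i,\Sb)=|\{\Eb: E_i=e,\ \wt(\Eb_{-i})=w,\ \Sb(\Eb)=\Sb\}|$ and derives a recursion in $w$ (via the unique ``$\Sst$-modification'' $\Eb\mapsto\Eb\,\Sb^{-1}(\Sst\oplus\Sb(\Eb))$ supplied by the perfect-code bijection), from which $e$-independence follows by induction.

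The genuine gap is your final step. The claim that every nonzero-syndrome coset of a perfect code carries the same weight enumerator --- which you justify by asserting that perfect single-error-correcting codes are completely regular --- is not a formality: it is essentially the entire combinatorial content of the lemma, and your proposal does not prove it (as you yourself note when you say a self-contained derivation ``would close the argument''). The statement is true, but establishing it requires real machinery adapted to the additive/symplectic setting: for instance, via the Lloyd-polynomial constraint one shows that every non-identity stabilizer element of a perfect code has weight exactly $2^{l-2}$ (external distance $1$), and only then does the MacWilliams transform force all nonzero-syndrome cosets to share one enumerator; alternatively one proves complete regularity by a counting recursion, which is in effect what the paper's \autoref{lemma:WeightDistEquation} does. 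Without carrying out one of these arguments, the proposal has reduced the lemma to an unproven statement of comparable depth rather than proved it. If you supply the missing step (Lloyd/MacWilliams for additive $\mathrm{GF}(4)$ codes, or a direct recursion for $C(\vec T)$), your route would constitute a complete and arguably more conceptual alternative proof.
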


In other words, we have to show that the sums in the expression do not depend on $e$ except if $\Sb = \Sb(\eei{i})$. This is the case if for all $w = 0,\dots,n-1$ and $e,e\primed \in \mathsf P_1$ such that $\Sb \neq \Sb(\e\exti{i}),\Sb((\bm \e\primed)\exti{i})$ we have
\begin{equation}
|\{\bm E \mid E\resti{i} = e,\wt(E\wi{i}) = w, \Sb(\Eb) = \Sb\}| \overset{!}{=} |\{\bm E \mid E\resti{i} = e\primed,\wt(E\wi{i}) = w, \Sb(\Eb) = \Sb\}| \, ,
\end{equation}
since then the coefficients for each of the exponents appearing in the expressions will be equal. Therefore, in the following, we will derive an expression for the ``modified'' weight distribution
\begin{equation}\label{eq:def:kw}
k_w(e,i,\Sb) \coloneqq |\underbrace{\{\Eb \mid E\resti{i} = e,\wt(\Eb\wi{i}) = w, \Sb(\Eb) = \Sb\}}_{\coloneqq K_w(e,i,\Sb)}|
\end{equation}
We will show that this distribution is independent of $e$ if $\Sb \neq \Sb(\bm e\exti{i})$. For the rest of this section, we fix a qubit $\qht \in \{1,\dots,n\}$, an error $\eht \in \mathsf P_1$ which will act on $\qht$ and some syndrome $\bm 0\neq \Sst \in \mathbb{F}_2^l$, and we denote $k_w \coloneqq k_w(\eht,\qht,\Sst)$ and $K_w \coloneqq K_w(\eht,\qht,\Sst)$. For now, we do not assume that $\Sst \neq \Sb(\bm \eht\exti{\qht})$.

\begin{notation}[Perfect Code Property]
Since we consider a perfect single error correcting code, for each syndrome $\Sb$ there exists a unique single qubit error $\eeq$ with $\Sb (\eeq) = \Sb$. 
We denote this error by $\Sb^{-1}(\Sb)$.
\end{notation}

The core idea of the proof is to construct the sets $K_w$ iteratively. 
We can use the perfect code property to construct weight $w$ errors with syndrome $\Sst$ from weight $w-1$ errors with any syndrome $\Sb\primed$ by adding the unique single qubit error  $\Sb^{-1} (\Sb\primed \oplus \Sst)$. 
We formalize this as follows.

\begin{definition}[\Modst{} and \Extst{}]
Let $\bm E \in \mathsf P_n$. 
We say an error $\Est$ is a \emph{\Modst{}} of $\Eb$ if $\Sb (\Est) = \Sst$ and there exists a single qubit error $\eeq$  with $\Est = \Eb \eeq$. 

We say $\Est$ is an \emph{\Extst{}} of $\Eb$ if $\Est$ is a \Modst{} of $\Eb$ with $\wt(\Est\wqht) = \wt(\Ewqht) + 1$ and $\Estqht = \eht$.
\end{definition}

Note that this definition does depend on the choice of $\vec \eht^{(q)}$ and $\Sst$, 
which is fixed for the rest of this section. 

It is simple to construct a \Modst{} for each error.

\begin{lemma}\label{lem:unique_modification}
Each error $\Eb \in \mathsf P_n$ has a \textbf{unique} \Modst{}. 
We denote it $\Est$.
\end{lemma}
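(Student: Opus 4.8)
The plan is to reduce the statement to the perfect code property by using that the syndrome map $\Eb \mapsto \Sb(\Eb)$ is a homomorphism from the (abelian) group $\mathsf P_n$ into $(\Synset,\oplus)$. Concretely, I would first rewrite what it means for $\Est$ to be an $\Sst$-modification of $\Eb$. By definition $\Est = \Eb\,\eeq$ for some single qubit error $\eeq$, together with $\Sb(\Est) = \Sst$. Applying the standard linearity $\Sb(\Eb\,\eeq) = \Sb(\Eb)\oplus\Sb(\eeq)$, the syndrome condition becomes equivalent to $\Sb(\eeq) = \Sst\oplus\Sb(\Eb)$. Because $\mathsf P_n$ is abelian with every element an involution, the assignment $\Est \mapsto \eeq = \Eb\,\Est$ is a bijection between candidate modifications of $\Eb$ and single qubit errors, so the number of $\Sst$-modifications equals the number of single qubit errors carrying the fixed syndrome $\Sb\primed \coloneqq \Sst\oplus\Sb(\Eb)$.

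Existence and uniqueness then follow in a single step from the Perfect Code Property: there is exactly one single qubit error $\eeq = \Sb^{-1}(\Sb\primed)$ with $\Sb(\eeq) = \Sb\primed$, which produces the explicit unique modification $\Est = \Eb\,\Sb^{-1}(\Sst\oplus\Sb(\Eb))$.

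The single point that needs care — and the only real (if minor) obstacle — is the degenerate case $\Sb\primed = \bm 0$, i.e.\ when $\Sb(\Eb) = \Sst$ already. Here I must confirm that $\Sb^{-1}(\bm 0)$ is legitimately defined and equals the trivial single qubit error $I$: the bijection $f$ maps the nontrivial single qubit errors onto $\Synset\setminus\{\bm 0\}$, so none of them has zero syndrome, leaving $I$ as the unique single qubit error with syndrome $\bm 0$. In this case the unique $\Sst$-modification is $\Eb$ itself. Once this is noted, the homomorphism-plus-bijection argument delivers both parts of the claim with no further computation.
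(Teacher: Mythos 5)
Your argument is correct and follows essentially the same route as the paper: existence via the explicit construction $\Est = \Eb\,\Sb^{-1}(\Sst \oplus \Sb(\Eb))$, and uniqueness because the perfect-code bijection leaves only one single qubit error with the required syndrome $\Sst \oplus \Sb(\Eb)$. Your explicit treatment of the degenerate case $\Sb(\Eb) = \Sst$ (where the unique modification is $\Eb$ itself, via $\Sb^{-1}(\bm 0) = I$) is a small point the paper leaves implicit in its Notation, but it does not change the substance.
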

\begin{proof}
Let $\eeq = \Sb^{-1}(\Sst \oplus \Sb (\bm E))$. Then $\Eb \eeq$ is a \Modst{} of $\bm E$. Furthermore, for two possible \Modst{}s $\Eb \eeq, \Eb (\bm e\primed)\exti{q\primed}$ with $\Sb (\Eb \eeq) = \Sb (\Eb (\bm e\primed)\exti{q\primed}) = \Sst$, we obtain $\Sb (\eeq) = \Sb ((\bm e\primed)\exti{q\primed}) = \Sst \oplus \Sb (\Eb)$. Because we consider a perfect code this implies $\eeq = (\bm e\primed)\exti{q\primed}$. Thus the \Modst{} is unique. 
\end{proof}

However, it is possible that an error $\Eb$ does not have a \Extst. This happens for example if the unique single qubit error that needs to be added to obtain the \Modst{} is already in $\Eb$, or if it is on $\qht$. 
We formalize this in the following corollary. 

\begin{corollary}
\label{lemma:NoStarExtension}
Let $\Eb \in \mathsf P_n$ be an error with $\Eqht = \eht$ and $\wt(\Eb\wqht) = w$. $\Eb$ does \textbf{not} have an \Extst{} if and only if one of the following mutually exclusive conditions is true:
\begin{enumerate}[label=(\roman*)]
\item \label{item:1} $\Est = \Eb$
\item \label{item:2} $\wt(\Est\wqht) = w-1 \wedge \Estqht = \eht$
\item \label{item:3} $\Est \neq \Eb \wedge \wt(\Est\wqht) = w \wedge \Estqht = \eht$
\item \label{item:4} $\Estqht \neq \eht$
\end{enumerate}
where as always $\Est$ is the unique \Modst{} of $\Eb$.
If we write $\Est = \Eb \eeq$, where $\eeq$ is a uniquely determined single qubit error acting on qubit $q$, these conditions are equivalent to
\begin{enumerate}[label=(\roman*')]
\item $e = I$
\item $\Ei{q} = e \wedge q \neq \qht \wedge e \neq I$ 
\item $\Ei{q} \neq e \wedge \Ei{q} \neq I \wedge q \neq \qht \wedge e\neq I$
\item $q = \qht \wedge e \neq I$
\end{enumerate}
 
\end{corollary}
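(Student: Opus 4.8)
The plan is to reduce the existence of an \Extst{} to a single property of the \emph{unique} \Modst{} $\Est$, and then to read the four cases off the single-qubit Pauli multiplication table. By \autoref{lem:unique_modification}, $\Eb$ possesses exactly one \Modst{}, namely $\Est = \Eb\eeq$ with $\eeq = \Sb^{-1}(\Sst \oplus \Sb(\Eb))$ acting as some $e \in \mathsf P_1$ on some qubit $q$. Since every \Extst{} is by definition a \Modst{}, the error $\Eb$ has an \Extst{} if and only if this single $\Est$ already meets the two extra requirements of an extension, (A) $\wt(\Est\wqht) = w+1$ and (B) $\Estqht = \eht$. Consequently $\Eb$ has \textbf{no} \Extst{} precisely when (A) or (B) fails, and the remaining work is to rewrite the failure of (A) or (B) as the stated disjunction.

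I would first dispose of condition (B). Because $\Est$ and $\Eb$ differ only on qubit $q$, we have $\Estqht = \Eqht = \eht$ whenever $q \neq \qht$, while for $q = \qht$ we get $\Estqht = \eht\, e$, which equals $\eht$ exactly when $e = I$. Hence (B) fails if and only if $q = \qht$ and $e \neq I$, which is condition \ref{item:4}. The degenerate possibility $e = I$ makes $\eeq = I$, so $\Est = \Eb$ and $\wt(\Est\wqht) = w \neq w+1$; here (A) fails, yielding condition \ref{item:1}.

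It then remains to analyze (A) in the regime $q \neq \qht$, $e \neq I$, where (B) automatically holds. Here the change in weight is dictated solely by the product $\Estq = \Ei{q}\, e$ on qubit $q$: if $\Ei{q} = I$ the weight rises to $w+1$, so (A) holds and an \Extst{} exists; if $\Ei{q} = e$ then $\Estq = I$, the weight drops to $w-1$, and we land in condition \ref{item:2}; and if $\Ei{q}\notin\{I,e\}$ then $\Estq$ is the third nontrivial Pauli, the weight is unchanged at $w$, giving condition \ref{item:3}. This split is exhaustive, and the four no-extension cases are pairwise disjoint---\ref{item:1} is the unique case with $e=I$, \ref{item:4} the unique remaining case with $q = \qht$, and \ref{item:2}, \ref{item:3} are separated by whether $\Ei{q} = e$---which at once gives the ``mutually exclusive'' assertion and matches the unprimed conditions to their primed reformulations.

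The one point deserving genuine care, and the likely main obstacle, is the bookkeeping of single-qubit Pauli products: the identities $\eht\, e = \eht \iff e = I$ and, for $e \neq I$, $\Ei{q}\, e = I \iff \Ei{q} = e$ with $\Ei{q}\, e$ otherwise a third nontrivial Pauli. These are exactly what convert each weight-change statement into an explicit condition on $\Ei{q}$, $q$, and $e$; once they are in hand, verifying exhaustiveness and disjointness is a routine enumeration of the four ways (A) or (B) can break.
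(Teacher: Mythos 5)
Your proposal is correct and follows essentially the same route as the paper: both reduce the existence of an \Extst{} to whether the \emph{unique} \Modst{} $\Est = \Eb\eeq$ satisfies $\Estqht = \eht$ and $\wt(\Est\wqht) = w+1$, show this is equivalent to $q \neq \qht \wedge \Ei{q} = I \wedge e \neq I$, and obtain the four cases by negation together with the fact that $\wt(\Est\wqht) \in \{w-1,w,w+1\}$. You merely spell out the case analysis of the negation more explicitly than the paper does, which is a matter of presentation rather than substance.
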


\begin{proof}
By definition $\Est = \bm E \eeq$ is an \Extst{} of $\Eb$ if and only if 
\begin{equation}
\begin{aligned}
 & \Estqht = \eht \ \wedge\ \wt(\Est\wqht) = w + 1
\\
  \Leftrightarrow \quad  &  q \neq \qht \ \wedge\ E\resti{q} = I \ \wedge\ e \neq I \, ,
\end{aligned}
\end{equation}
where we have used that $\Eqht = \eht$. 
Negating this statement and using that ${\wt(\Est\wqht) \in \{w-1,w,w+1\}}$ leads to the conditions above. 
\end{proof}

Since similar reasoning will be used repeatedly throughout this section, let us illustrate some of the cases in \autoref{lemma:NoStarExtension} with an example.
 Consider the 5 qubit perfect code with stabilizer generators $g_1 = X \otimes Z \otimes Z \otimes X \otimes I$, $g_2 = I \otimes X \otimes Z \otimes Z \otimes X$, $g_3 = X \otimes I \otimes X \otimes Z \otimes Z$, and $g_4 = Z \otimes X \otimes I \otimes X \otimes Z$. 
 For this example, let $\qht = 1$, $\eht = X$, and  $\Sst = (1,0,0,1)$.
 The error $\Eb = X \otimes X \otimes X \otimes I \otimes I$ has the syndrome $(0,1,0,1)$, and thus its \Modst{} is obtained by applying $\Sb^{-1}((1,1,0,0)) = \bm X\exti{3}$, resulting in $\Est = X \otimes X \otimes I \otimes I \otimes I$.
 This is not a valid \Extst{} since the weight was reduced, corresponding to case \ref{item:2} in \autoref{lemma:NoStarExtension}.
 The single qubit error we applied canceled with an existing error in $\Eb$.
 On the other hand, the error $\Eb = X \otimes I \otimes Z \otimes Z \otimes I$ has the syndrome $\Sb(\Eb) = (1,0,1,0)$. 
 Thus its \Modst{} is obtained by adding $\bm e = \Sb^{-1}((0,0,1,1)) = \bm X\exti{5}$, resulting in $\Est = X \otimes I \otimes Z \otimes Z \otimes X$. 
 This is a valid \Extst{}. Notice that the additional single qubit error was applied on qubit 5 where $\Eb$ acts trivially, or equivalently,  $E^{\ast}\resti{5} = e$.
 
In \autoref{lemma:NoStarExtension} we categorized errors without a valid \Extst{} by their \Modst{}.
Now we characterize $k_w$ in terms of \Extst{}s. 

\begin{lemma}
\label{lemma:numberkstar}
For any $w > 0$
\begin{align*}
k_w = |\{\Eb \in \mathsf P_n \mid \exists \Eb\primed \in \mathsf P_n: \Eb \text{ is a \Extst{} of } \Eb\primed, E\primed\resti{\qht} = \eht ,\wt(\Eb\primed\wqht) = w - 1\}| \, .
\end{align*}
\end{lemma}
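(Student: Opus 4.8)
The plan is to recognize the right-hand side as the cardinality of a single set of errors---write $\mathcal I$ for the set of all $\Eb \in \mathsf P_n$ that arise as an \Extst{} of some $\Eb\primed$ with $E\primed\resti{\qht} = \eht$ and $\wt(\Eb\primed\wqht) = w-1$---and to prove that $\mathcal I = K_w$ as sets. Since both $k_w = |K_w|$ and the right-hand side $|\mathcal I|$ are cardinalities of sets of errors $\Eb$, the claimed identity follows at once from this set equality; multiplicities (several $\Eb\primed$ extending to the same $\Eb$) are irrelevant. By \autoref{lem:unique_modification} every $\Eb\primed$ has a unique \Modst, so each $\Eb\primed$ contributes at most one extension and $\mathcal I$ is genuinely the image of a partial map.

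The inclusion $\mathcal I \subseteq K_w$ is immediate from unfolding the definition of an \Extst. If $\Eb$ is an \Extst{} of such an $\Eb\primed$, then by definition $\Eb$ is the \Modst{} of $\Eb\primed$, so $\Sb(\Eb) = \Sst$; moreover $\Eqht = \eht$ and $\wt(\Eb\wqht) = \wt(\Eb\primed\wqht) + 1 = w$. These are precisely the three conditions defining $K_w$ in \eqref{eq:def:kw}, so $\Eb \in K_w$, and hence $|\mathcal I| \le k_w$.

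For the reverse inclusion $K_w \subseteq \mathcal I$, which is the heart of the argument, I would fix $\Eb \in K_w$ and build a preimage by deleting a single-qubit error. Since $w > 0$ the restriction $\Eb\wqht$ has a non-trivial tensor factor, so I may pick a qubit $q \neq \qht$ with $E\resti{q} \neq I$ and set $\Eb\primed \coloneqq \Eb\,\Ebi{q}$, cancelling the error on qubit $q$. Then $E\primed\resti{\qht} = \eht$ because $q \neq \qht$, and $\wt(\Eb\primed\wqht) = w-1$. It remains to verify that $\Eb$ is the \Extst{} of this $\Eb\primed$. Using that the syndrome map is a homomorphism, $\Sb(\Eb\primed) = \Sb(\Eb) \oplus \Sb(\Ebi{q})$, together with $\Sb(\Eb) = \Sst$, I would compute $\Sst \oplus \Sb(\Eb\primed) = \Sb(\Ebi{q})$, which is a \emph{non-trivial} syndrome because $E\resti{q} \neq I$. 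The perfect-code property then gives $\Sb^{-1}(\Sst \oplus \Sb(\Eb\primed)) = \Ebi{q}$, so the \Modst{} of $\Eb\primed$ is $\Eb\primed\,\Ebi{q} = \Eb$, using $\Ebi{q}\Ebi{q} = I$ in $\mathsf P_n$. Since the weight increased by one and $\Eqht = \eht$, this \Modst{} is in fact an \Extst, placing $\Eb \in \mathcal I$.

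The only step needing genuine care is this last verification: one must confirm that the single-qubit error deleted to form $\Eb\primed$ is exactly the one that the \Modst{} re-adds. This is where the defining bijection of a perfect code between non-trivial single-qubit errors and non-trivial syndromes enters, and where the hypotheses $w > 0$ and $E\resti{q} \neq I$ are used to guarantee that $\Sst \oplus \Sb(\Eb\primed)$ is non-trivial, so that $\Sb^{-1}$ is well-defined and returns $\Ebi{q}$. Everything else---the weight bookkeeping and the self-inverse identity---is routine, and combining the two inclusions yields $k_w = |\mathcal I|$.
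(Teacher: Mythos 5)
Your proposal is correct and follows essentially the same route as the paper: the forward inclusion is immediate from the definition of an \Extst{}, and the reverse inclusion is obtained by deleting a non-identity single-qubit factor on some $q \neq \qht$ and checking that the resulting $\Eb\primed$ extends back to $\Eb$. Your explicit verification via the perfect-code bijection that the deleted error is exactly the one the unique \Modst{} re-adds is a detail the paper leaves implicit, but the argument is the same.
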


\begin{proof}
By definition of $k_w$ \eqref{eq:def:kw}, we have to show that
\begin{align*}
&|\{\Eb \in \mathsf P_n \mid \Eqht = \eht,\wt(\Ewqht) = w, \Sb(\Eb) = \Sst\}| \\
=&|\{\Eb \in \mathsf P_n \mid \exists \Eb\primed \in \mathsf P_n: \Eb \text{ is a  \Extst{} of } \Eb\primed, E\primed\resti{\qht} = \eht ,\wt(\Eb\primed\wqht) = w - 1\}| \, .
\end{align*}

$\supseteq$: By definition of \Extst{}.

$\subseteq$: Let $\Eb \in \mathsf P_n$ be an error such that $\Eqht = \eht,\wt(\Ewqht) = w$ and $\Sb(\Eb) = \Sst$. Chose a qubit $q \neq \qht$ such that $\Ei{q} \neq I$.
Then $\Eb$ is an \Extst{} of $\Eb\primed \coloneqq \Eb \Ebi{q}$. Furthermore $\wt(\Eb\primed\wqht) = \wt(\Ewqht) - 1$ and $E\primed\resti{\qht} = \eht$ by definition of $\Eb\primed$.
\end{proof}

While this establishes a connection between the weight distribution $k_w$  and the concept of \Extst{}, it is difficult to count all errors that are valid \Extst{}s. 
A number easier to characterize is
\begin{equation}
l_w \coloneqq |\underbrace{\{\Eb \in \mathsf P_n \mid \Eb \text{ has a \Extst }, \Eqht = \eht,\wt(\Eb\wqht) = w-1\}}_{\coloneqq L_w}| \, .
\end{equation}
This is similar to the characterization in \autoref{lemma:numberkstar}, but $l_w > k_w$ because two different errors can have the same \Extst. We have to correct for this ``double counting''. 

\begin{lemma}
\label{lemma:kVslRelation}
\begin{equation}
k_w = \frac{l_w}{w}
\end{equation}
\end{lemma}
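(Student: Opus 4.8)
The plan is to prove the stronger statement $l_w = w\,k_w$ by exhibiting a $w$-to-one map from $L_w$ onto $K_w$. I would take $\phi\colon L_w \to K_w$ to be the assignment sending each $\Eb \in L_w$ to its unique \Extst{}. This is well-defined: every $\Eb \in L_w$ possesses an \Extst{} by the definition of $L_w$; this extension is unique because the \Modst{} is unique by \autoref{lem:unique_modification}; and it lands in $K_w$ since by definition any \Extst{} $\Est$ of $\Eb$ satisfies $\Sb(\Est) = \Sst$, $\Estqht = \eht$, and $\wt(\Est\wqht) = \wt(\Ewqht) + 1 = (w-1) + 1 = w$. By \autoref{lemma:numberkstar} the image of $\phi$ has exactly $k_w = |K_w|$ elements, so $\phi$ is onto.

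The crux is to show that every $\Eb \in K_w$ has precisely $w$ preimages under $\phi$. Reversing the extension, if $\Eb$ is the \Extst{} of some $\Eb\primed$ then $\Eb = \Eb\primed \eeq$ for a nontrivial single qubit error $\eeq$ on a qubit $q \neq \qht$ with $E\primed\resti{q} = I$ (this is exactly the complement of the four obstructions in \autoref{lemma:NoStarExtension}), so that $E\resti{q} \neq I$ and $\Eb\primed$ arises from $\Eb$ by resetting one nontrivial off-$\qht$ factor to the identity; hence every preimage is obtained this way. Since $\wt(\Ewqht) = w$ there are exactly $w$ nontrivial off-$\qht$ factors, giving $w$ pairwise distinct candidates (they disagree on which factor was reset). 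Conversely, for each qubit $q \neq \qht$ with $E\resti{q} \neq I$ I would set $\eeq \coloneqq (E\resti{q})\exti{q}$ and $\Eb\primed \coloneqq \Eb \eeq$, so that $E\primed\resti{\qht} = \eht$ and $\wt(\Eb\primed\wqht) = w-1$. The perfect code property is what forces $\Eb$ to be the \Modst{} of this $\Eb\primed$: from $\Sb(\Eb) = \Sst$ we get $\Sst \oplus \Sb(\Eb\primed) = \Sb(\eeq)$, and since $\eeq \neq I$ has a nontrivial syndrome it is recovered as $\Sb^{-1}(\Sb(\eeq)) = \eeq$, whence the unique \Modst{} of $\Eb\primed$ equals $\Eb\primed \eeq = \Eb$; as this raises the off-$\qht$ weight by one and keeps the $\qht$-th factor at $\eht$, it is an \Extst{}, so $\Eb\primed \in L_w$ and $\phi(\Eb\primed) = \Eb$. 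Thus each fiber has size exactly $w$.

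Summing over the fibers of the surjection $\phi$ then gives $l_w = |L_w| = w\,|K_w| = w\,k_w$, which rearranges to $k_w = l_w/w$. I expect the fiber-size count to be the delicate step: one must check that reversing an \Extst{} is forced to delete a single off-$\qht$ nontrivial factor rather than touching $\qht$ or changing the weight differently, and that the perfect code property pins down the reinstated single qubit error uniquely, so that no preimage is missed or double-counted.
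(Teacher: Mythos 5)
Your proposal is correct and follows essentially the same route as the paper: the paper also uses the (unique) \Modst{} map restricted to $L_w$, which on $L_w$ coincides with your \Extst{} assignment $\phi$, establishes surjectivity onto $K_w$ via \autoref{lemma:numberkstar}, and counts each fiber as the $w$ ways of deleting a nontrivial off-$\qht$ tensor factor. Your explicit verification that the perfect code property forces $\Eb$ to be the \Modst{} of each candidate preimage is exactly the content the paper compresses into its ``if and only if'' characterization of the fibers.
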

\begin{proof}
By \autoref{lem:unique_modification} we have a well defined function $g: \mathsf P_n \mapsto \mathsf P_n$ that maps an error $\Eb \in \mathsf P_n$ to its \Modst{} $\Est \in P_n$. By \autoref{lemma:numberkstar} and the definition of $L_w$, $g$ maps $L_w$ to $K_w$, and the restriction $g_{|L_w}: L_w \rightarrow K_w$ is surjective. Because the \Modst{} is unique, the pre-images of two distinct elements of $K_w$ under $g$ are disjoint. Thus,
\begin{equation}
|L_w| = \sum_{\Eb \in K_w} |g_{|L_w}^{-1}(\Eb)|
\end{equation}
We want to determine the size of these pre-images. So let $\Eb \in K_w$. From the definition of $L_w$ and the definition of \Extst{}, it follows that $\Eb\primed \in g_{|L_w}^{-1}(\Eb)$ if and only if there exists a qubit $q \neq \qht$ such that $E\resti{q} \neq I$ and $\Eb\primed = \Eb \Ebi{q}$. Thus, since by definition $\wt(\Eb\wqht) = w$, the pre-image has $w$ elements. This concludes the proof.
\end{proof}

Thus, we can characterize the weight distribution $k_w$ through the numbers $l_w$, for which we derive a recursive formula. 

\begin{lemma}
\label{lemma:WeightDistEquation}
There exists a recursive formula relating $k_w$ to $k_{w-1}$ and $k_{w-2}$
\end{lemma}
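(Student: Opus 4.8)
The plan is to count the errors in $L_w$ by classifying them according to the behavior of their $\Modst$, using exactly the case analysis already set up in \autoref{lemma:NoStarExtension}. Recall $l_w = |L_w|$ counts errors $\Eb$ with $\Eqht = \eht$ and $\wt(\Eb\wqht) = w-1$ that \emph{do} have an \Extst{}, whereas the total number of errors with $\Eqht = \eht$ and $\wt(\Eb\wqht) = w-1$ (regardless of whether they have a \Extst{}) is naturally related to $k_{w-1}$. So the first step is to express this total count: the set of all $\Eb$ with $\Eqht = \eht$, $\wt(\Eb\wqht) = w-1$, and $\Sb(\Eb) = \Sst$ is precisely $K_{w-1}$, of size $k_{w-1}$. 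Thus $l_w = k_{w-1} - (\text{number of errors in } K_{w-1} \text{ having } \emph{no} \text{ \Extst{}})$.

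Second, I would count the errors in $K_{w-1}$ with no \Extst{} by invoking \autoref{lemma:NoStarExtension}, which partitions them into the four mutually exclusive cases (i)--(iv) in terms of the \Modst{} $\Est = \Eb\eeq$. Since we are inside $K_{w-1}$ we have $\eht \neq I$ fixed, so case (i) (which requires $\eht = I$) is vacuous, and I focus on cases (ii), (iii), (iv). The key observation is that each of these cases describes $\Eb$ via a condition on its unique \Modst{} $\Est$, and the map $\Eb \mapsto \Est$ lands in sets whose sizes are themselves controlled by the $k_w$'s. Concretely, case (ii) says $\Est$ is obtained by \emph{canceling} an existing single-qubit error off $\qht$, reducing the off-$\qht$ weight to $w-2$ while keeping $\Estqht = \eht$ --- so these $\Est$ lie in $K_{w-2}$, and I would count how many $\Eb \in K_{w-1}$ map to each such $\Est$ (this is again a ``double counting'' factor, analogous to \autoref{lemma:kVslRelation}). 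Case (iii) keeps the off-$\qht$ weight at $w-1$ with $\Estqht = \eht$, so $\Est \in K_{w-1}$; and case (iv), where the added error sits on $\qht$, corresponds to $\Sst = \Sb(\bm\eht\exti{\qht})\oplus\Sb(\Eb\wqht)$-type coincidences and contributes a term I would count directly. Summing the three counts and substituting back yields $l_w$, hence via \autoref{lemma:kVslRelation} the relation $k_w = l_w/w$ expressed through $k_{w-1}$ and $k_{w-2}$.

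The main obstacle will be the combinatorial bookkeeping in cases (ii) and (iii): determining the multiplicity of the map $\Eb \mapsto \Est$ on each case-class, i.e.\ how many distinct $\Eb \in K_{w-1}$ produce a given $\Est$, and making sure the case partition from \autoref{lemma:NoStarExtension} is used consistently (the primed reformulation in terms of $e$, $q$, $\Ei{q}$ must be matched carefully to which target set $\Est$ lands in). A subtle point is that the added single-qubit error $\eeq = \Sb^{-1}(\Sst \oplus \Sb(\Eb))$ depends on $\Eb$, so the ``inverse'' counting is not uniform; I expect the factors of $w$ and $w-1$ to emerge from counting the qubits off $\qht$ where the relevant error can be placed, just as in the proof of \autoref{lemma:kVslRelation}. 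Once these multiplicities are pinned down, assembling them into a single recursion is routine algebra, and I would state the resulting closed recursive formula for $k_w$ in terms of $k_{w-1}$ and $k_{w-2}$ as the conclusion. Crucially, because every term in the recursion is expressed purely through the $k$'s (which do not reference $\eht$ beyond $\eht \neq I$), the recursion --- together with the fact proved separately that $k_0$ is independent of $\eht$ whenever $\Sst \neq \Sb(\bm\eht\exti{\qht})$ --- will force $k_w$ to be independent of the choice of $\eht$, which is the statement feeding back into \autoref{lemma:PerfectCodeEqualSums}.
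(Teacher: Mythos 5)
There is a genuine gap at the very first step, and it propagates through the whole argument. You set $l_w = k_{w-1} - |\{\text{errors in } K_{w-1}\text{ with no \Extst{}}\}|$, i.e.\ you take the base set of the inclusion--exclusion to be $K_{w-1}$, the errors with syndrome $\Sst$. But $L_w$ carries no syndrome constraint: it consists of \emph{all} errors $\Eb$ with $\Eqht = \eht$ and $\wt(\Ewqht) = w-1$ that admit an \Extst{}, and such an error necessarily has $\Sb(\Eb) \neq \Sst$ (if $\Sb(\Eb) = \Sst$, the unique \Modst{} is $\Eb$ itself, which by case \ref{item:1} of \autoref{lemma:NoStarExtension} means there is no \Extst{}). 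So in fact $L_w \cap K_{w-1} = \emptyset$, and your formula would literally give $l_w = k_{w-1} - k_{w-1} = 0$. The correct base set is all $3^{w-1}\binom{n-1}{w-1}$ errors with $\Eqht = \eht$ and $\wt(\Ewqht) = w-1$, over \emph{all} syndromes, from which one subtracts the four case counts.

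A second, related error: you declare case (i) vacuous ``since $\eht \neq I$.'' Condition (i') reads $e = I$ where $\Est = \Eb\eeq$ is the \Modst{}; this $e$ is the Pauli \emph{added} in the modification, not the error $\eht$ sitting on qubit $\qht$. Case (i) is equivalent to $\Sb(\Eb) = \Sst$ and contributes exactly $k_{w-1}$ to the count of non-extendable errors --- it is one of the terms you cannot drop. Your treatment of cases (ii)--(iv) is directionally right (the multiplicities $3(n-w+1)$ and $2(w-1)$ do come from counting placements of the added Pauli, as in \autoref{lemma:kVslRelation}), and you correctly anticipate that case (iv) produces the extra term $\sum_{e' \neq \eht} k_{w-1}(e',\qht,\Sst)$ involving other single-qubit errors on $\qht$; but with the wrong base set and the missing case-(i) term the assembled recursion would be incorrect.
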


\begin{proof}
We prove that 
\begin{equation}
\begin{split}
l_w =  3^{w-1}\binom{n-1}{w-1} &- k_{w-1}
\\
&
- 3(n-w+1)k_{w-2} 
  \\
  &
- 2(w-1)k_{w-1}  
\\
&
- \sum_{\substack{e\primed \in \mathsf{P}_1 \\ \eht  \neq e\primed}} k_{w-1}(e\primed,\qht,\Sst)
\end{split}
\label{eq:Recursivel_w}
\end{equation}
for any $2\leq w\leq n$. \autoref{lemma:kVslRelation} then gives the corresponding equation for $k_w$.

The total number of errors $\Eb \in \mathsf P_n$ with $\wt(\Ewqht) = w-1$ and $\Eqht  = \eht$ is $3^{w-1}\binom{n-1}{w-1}$ since there are $\binom{n-1}{w-1}$ ways to chose $w-1$ positions in $n-1$ positions, and 3 possible Paulis on each position. 
Next we count how many of them do \textbf{not} have an \Extst. The different conditions for this are given in \autoref{lemma:NoStarExtension}, where errors without an \Extst{} are categorized by their \Modst{}. We count the number of errors $\Eb \in \mathsf P_n$ with $\wt(\Ewqht) = w-1$ and $\Eqht  = \eht$ fulfilling each of these different conditions. We can group errors without a valid \Extst{} by their \Modst{}, i.e.\ 
\begin{equation}\label{eq:non-extensions}
\begin{split}
&\{\Eb \in \mathsf P_n \mid \wt(\Ewqht) = w-1,\Eqht = \eht, \Eb \text{ does not have a \Extst{} } \} \\
= 
&\underset{\substack{\Eb\primed \in \mathsf P_n : \\ \Eb\primed \text{ is not a \Extst}}}{\bigcup} \{\Eb \in \mathsf P_n \mid \Eb^{\ast} = \Eb\primed, \Eqht =\eht, \wt(\Ewqht) = w-1 \} \, ,
\end{split}
\end{equation}
where all the individual sets are disjoint because the \Modst{} is unique.
To do this, we have to consider the following cases, for each of which $\wt(\Ewqht) = w-1$ and $\Eqht  = \eht$ hold. 

\begin{description}[leftmargin=\parindent,labelindent=0pt,font=\normalfont]
\item[Case \ref{item:1}] $\Est = \Eb$.\\
This condition is equivalent to $\Sst = \Sb(\Eb)$.
By definition there are $k_{w-1}$ such errors.

\item[Case \ref{item:2}] $\wt(\Est\wqht) = w-2 \wedge \Estqht = \eht$. \\
For each error $\Eb$ fulfilling this condition, we have that $\Eb = \Est \eeq$ for a Pauli $e\in \mathsf P_1 \setminus \{I\}$ and a qubit $q \neq  \qht$ with $\Estq = I$.
For a given error $\Eb\primed$ with $\wt(\Eb\primed\wqht) = w - 2$, 
$\Eb\primed_{\qht} = \eht$ and $\Sb(\Eb\primed) = \Sst$, 
there are $n-1-(w-2) = n-w+1$ possibilities to chose a qubit $q \neq \qht$ with $E\primed\resti{q} = I$. 
For each of these, there are 3 different Paulis one could add to this position.
Each of these gives a distinct error $\Eb$ with $\Est = \Eb\primed$. 
The total number of errors $\Eb\primed$ with $\wt(\Eb\primed\wqht) = w - 2$, $\Eb\primed_{\qht} = \eht$ 
and $\Sb(\Eb') = \Sst$ 
is by  definition $k_{w-2}$, and because the \Extst{} is unique they all give distinct contributions. 
Thus,
\begin{align*}
&|\{\Eb \in \mathsf P_n \mid \wt(\Est\wqht) = w-2, \Estqht = \eht, \wt(\Ewqht) = w-1, \Eqht  = \eht\}|  \\ 
&= 3(n-w+1)\,|\{\Eb\primed \in \mathsf P_n \mid E\primed\resti{\qht} = \eht, \wt(\Eb\primed\wqht) = w-2, \Sb(\Eb\primed) = \Sst \}| \\
& = 3(n-w+1)k_{w-2} \, .
\end{align*}

\item[Case \ref{item:3}:] $\Est \neq \Eb \wedge \wt(\Est\wqht) = w-1 \wedge \Estqht = \eht$. \\
For each such error $\Eb$ it holds $\Eb = \Est\eeq$ for a Pauli $e\in \mathsf P_1 \setminus \{I, \Estq\}$ and a qubit $q \neq \qht$ with $\Estq \neq I$. 
For a given error $\Eb\primed$ with ${\wt(\Eb\primed\wqht) = w - 1}$, there are $w-1$ choices for $q \neq \qht$ such that $E\primed\resti{q}  \neq I$, and for each choice of $q$ there are 2 possible choices of ${e \in P_1 \setminus \{I,E\primed\resti{q}\}}$. 
The total number of errors $\Eb\primed$ with $ \wt(\Eb\primed\wqht) = w - 1, E\primed\resti{\qht} = \eht$ and 
$\Sb(\Eb\primed) = \Sst$ is by definition $k_{w-1}$, and again they give distinct contributions.
Thus,
\begin{align*}
&|\{\Eb \in \mathsf P_n \mid \Est \neq \Eb, \wt(\Est\wqht) = w-1, \Estqht = \eht,  \wt(\Ewqht) = w-1, \Eqht  = \eht\}| \\ 
&= 2(w-1)|\{\Eb\primed \mid E\primed\resti{\qht} = \eht, \wt(\Eb\primed\wqht) = w-1, \Sb(\Eb\primed) = \Sst \}| \\
& = 2(w-1)k_{w-1}\, .
\end{align*}

\item [Case \ref{item:4}:] $\Estqht \neq \eht$. \\
For each such error $\Eb$ there exists a corresponding $\Eb\primed = \Est$ such that $\Eb = \Eb\primed \eei{\qht}$ for an appropriate Pauli $e \in \mathsf P_1 \setminus \{I\}$.
Note that $\wt(\Eb\primed\wqht) = \wt(\Eb\wqht) = w-1$.
The total number of errors $\Eb\primed$ with $\wt(\Eb\primed\wqht) = w-1$, $E\primed\resti{\qht} \neq \eht$ and $\Sb(\Eb\primed) = \Sst$ is by definition $\sum_{e\primed \in \mathsf P_1 | \eht  \neq e\primed} k_{w-1}(e\primed,\qht,\Sst)$, and because the \Modst{} is unique the different $e\primed$ give different contributions. 
\end{description}

There is no double counting because the union in \eqref{eq:non-extensions} is disjoint.
Finally we obtain the number of errors that have a valid \Extst{} by subtracting the number of errors without a valid \Extst{} from the total number of errors, which yields the recursion \eqref{eq:Recursivel_w}.
\end{proof}

With this recursive formula we can easily prove by induction that for a given qubit $q$, $k_w(e,q,\Sb)$ is (almost) independent of $e$ and $\Sb$.

\begin{proof}[Proof of \autoref{lemma:PerfectCodeEqualSums}]
We consider again a fixed qubit $\qht$ and syndrome $\Sst \in \Synset$, 
and prove that the numbers $k_w(e,\qht,\Sst)$ are equal for any $e \in \mathsf P_1$ such that $\Sb(\eei{\qht}) \neq \Sst$. 
Let $\eht \in \mathsf P_1$ with $\Sb((\bm \eht)\exti{\qht}) \neq \Sst$. We consider two different cases, corresponding to different initial conditions for 
\autoref{lemma:WeightDistEquation}. 
The two cases are:
\begin{enumerate}
\item $\Sst = \Sb((\bm e\primed)\exti{\qht})$ for some $e\primed \neq \eht$
\item $\Sst \neq \Sb(\bm e\exti{\qht})$ for any error $e$ acting on qubit $\qht$
\end{enumerate}
Consider case 2 first.
For $w = 0$, we have $k_0(\eht,\qht,\Sst) = 0$ independent of $(\eht,\Sst)$ because $\Sst \neq \Sb(\bm e\exti{\qht}) \, \forall e \in P_1$.
For $w = 1$, $k_1(\eht,\qht,\Sst) = 1$ is independent of $(\eht, \Sst)$ because the only error $\eeq$ with $\Sb(\bm \eht\exti{\qht}\eeq) = \Sst$ is $\Sb^{-1}(\Sb(\bm \eht\exti{\qht}) \oplus \Sst))$ (and this error does not act on $\qht$ because $\Sst \neq \Sb(\bm e\exti{\qht}) \forall e \in P_1$.). For $w > 1$, the claim follows by induction since the right hand side of the recursive equation in \autoref{lemma:WeightDistEquation} is now independent of $\eht$ and $\Sst$. This concludes the proof for case 2. In case 1 the initial conditions are $k_0 = 0, k_1 = 0$. The rest of the proof is analogous. The only caveat is that the last term of \eqref{eq:Recursivel_w} now also contains a term $k_{w-1}(e\primed,\qht,\Sst)$ for an error $e\primed$ with $\Sb(\bm e\exti{\qht}) = \Sst$. But this term can be computed using the same recursive equation, and does not depend on $e$.
\end{proof}

This finally concludes the proof of \autoref{lemma:PerfectCodeEqualSums}, and thus also the proof of \autoref{lemma:PerfectCodeIdent}. As mentioned above, \autoref{lemma:WeightDistEquation} can also be used to calculate the numbers $k_w(e,\qht,\Sst)$ for the remaining case $\Sst = \Sb(\bm e\exti{\qht})$. The correct initial conditions are $k_0 = 1, k_1 = 0$.

\subsection{Additional Numerical Results}
\label{sec:AdditionalNumericalResults}
Here, we provide data complementary to the results shown in \autoref{sec:NumericalResults}. In particular, we consider the \acf{MSE} of the proposed estimator, and we show results with noisy measurements.
\subsubsection{\acs{MSE} of the Estimator}
First, we will demonstrate that the \acs{EM} estimator achieves the \acf{CRB}. The \acs{MSE} of the estimator $T$ of a parameter $\theta$ can expressed by the bias-variance decomposition
\begin{equation}
\operatorname{MSE} = \operatorname{bias}(T)^2 + \operatorname{var}(T) \, .
\end{equation}
Assume we want to estimate the error rates $\btheta$ of a code from $m$ independent syndrome observations. Then the covariance of any unbiased estimator $\bm T$ of $\btheta$ is bounded by the \ac{CRB}
\begin{equation}
\operatorname{cov}_{\btheta}(\bm T) \geq \frac{I(\btheta)^{-1}}{m}
\end{equation}
i.e.\ $\operatorname{cov}_{\btheta}(\bm T) - \frac{I(\btheta)^{-1}}{m}$ is a positive semi-definite matrix; 
 here, the Fisher information matrix $I$ is defined by 
\begin{equation}
I_{i,j} = \mathbb{E}_{\Sb}\biggl[\partialderiv{\ln(\p{\Sb | \btheta})}{\theta_i}\partialderiv{\ln(\p{\Sb | \btheta})}{\theta_j}\biggr] . 
\end{equation}
In particular, the variance in the estimate of a single parameter is bounded by the diagonal entries of the inverse of the Fisher information. The derivative $\partialderiv{\ln(\p{\Sb | \btheta}}{\theta^i_e}$ of the log-likelihood with respect to a parameter $\theta^i_e$ was already computed in \eqref{eq:LLDerivative} as
\begin{equation}
\partialderiv{\ln(p(\Sb | \btheta)}{\theta^i_e} = \frac{\PeicS{e}}{\Pei{e}} - \frac{\PeicS{I}}{\Pei{I}} \, .
\end{equation}
Since the probabilities $\p{e_i | \bm S}$ can be computed using \acs{BP}, we can numerically evaluate this bound for concrete codes and noise models and compare our estimator to this bound. 
However, for concatenation levels beyond the first, it was necessary to approximate the expectation value over all syndromes by Monte-Carlo sampling. We used $10^6$ samples to do this. As a side note, it is not sufficient to consider the \ac{CRB} for direct observation of the errors (which is much easier to evaluate). It can be shown that the Fisher information always decreases when post-processing the data, and thus the bounds for syndrome observations must necessarily be higher than for direct measurements of the errors (in our cases the difference was about a factor 2). 
Finally, it should be noted that in our simulations we have access to the actual error rates which makes it possible to compute the \ac{MSE}. 
In a real experiment, one could for example consider the variance instead. In our tests, the \ac{EM} estimator exhibited a squared bias that was very small compared to the variance, such that the variance coincides with the \ac{MSE}. However, the \ac{HEM} estimator showed significant bias in some settings. In the following, we always consider the \ac{MSE} in the estimation of $\theta^1_X$. However, plots for the other parameters look similarly. The \ac{MSE} was always determined over $10 ^3$ simulations for each data point. We consider the \ac{MSE} of the estimation at error rate $p=0.13$.
For a relatively bad initialization results were already shown in \autoref{fig:MSEOverConcat_A20_p013_nest1000}. Here, we consider the situation where an accurate initialization is available, demonstrated by using $\alpha = 200$. An example comparing the \ac{MSE} at the first concatenation level is shown in \autoref{fig:MSEOverDataSize_a200_p013_concat1}.
\begin{figure}[htb]
\includegraphics[scale=1]{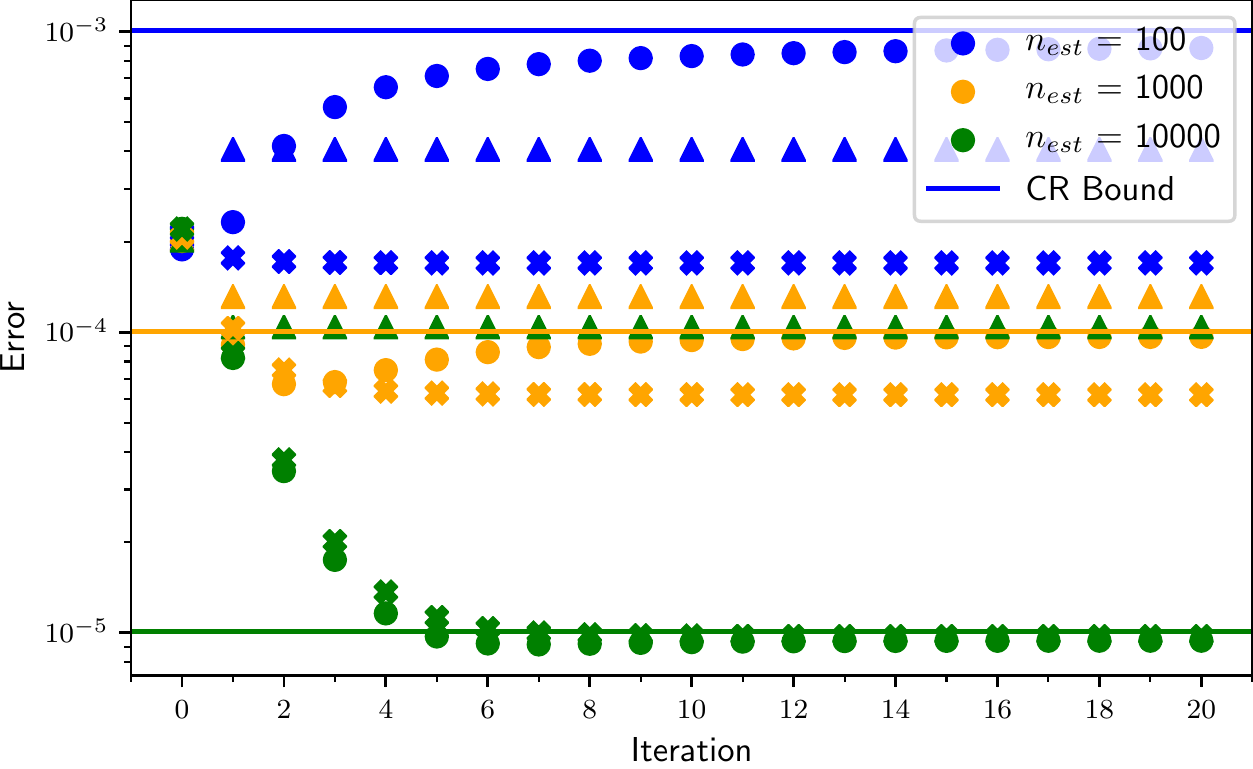}
\caption{Comparison of the \ac{MSE} in $\theta^1_X$ between \ac{EM} (circles), \ac{HEM} (triangles) and regularized \ac{EM} (crosses) for different amounts of estimation data $\nest$ for a good initialization at the first concatenation level. The parameters were $p = 0.13$, $\alpha= 200$, $\nconcat = 1$. $\beta = 200$ was used for the regularized version.
}
\label{fig:MSEOverDataSize_a200_p013_concat1}
\end{figure} 
For low data sizes, the initialization is more accurate than the estimate using the data set. In this case, \ac{HEM} outperforms \ac{EM} and even beats the \ac{CRB} (remember that the \ac{CRB} as it is used here only applies to unbiased estimators). The reason is that \ac{HEM} has a strong bias towards the initial parameters, which did not decrease with the size of the data sets or the number of iterations in our simulations. At larger data sizes this bias is detrimental, and it can be seen that \ac{EM} outperforms \ac{HEM}.
Especially for low numbers of iterations, \ac{EM} also exhibits some bias towards the initialization.
 This can be desirable in case of a good initialization, since it explains why \ac{EM} also slightly beats the \ac{CRB} at low data sizes. 
In particular, we see that at $\nest = 100$ and $\nest = 1000$ \ac{EM} performs better if a low number of around 3 iterations is used.
 Note that a small bias remains at higher iterations, which explains why \ac{EM} also slightly beats the \ac{CRB}.
Especially interesting is the case $\nest = 1000$, where \ac{EM} both improves over the initialization and clearly beats the \ac{CRB} at low numbers of iterations.
Since we do not know beforehand after how many iterations the procedure should be stopped, it is sensible to instead regularize the estimator in such a setting, such that it does not converge away from the improved value at low iterations. The regularization is done by introducing a Dirichlet prior
\begin{equation}
\p{\btheta^i} = \frac{1}{B(\alpha)} \prod_{e \in \mathsf P_1} (\theta^i_e)^{\beta^i_e}
\end{equation}
over the initialization, representing information on its accuracy (see Ref.~\cite{Bishop_ML}). Here, $\beta^i_e = (1 - (\theta^{(0)})^i_e)\beta$ and the real hyper-parameter $\beta$ controls the strength of the regularization.
The effect of this regularization, using $\beta = 200$, is also demonstrated in \autoref{fig:MSEOverDataSize_a200_p013_concat1} (the cross-shaped markers). It can be seen that the regularized \ac{EM} algorithm converges roughly to the minimum of the unregularized version, which was the desired effect. For large data sizes, the regularization introduces a minimal increase in the estimation error. We also tested regularizing the \ac{HEM} version in the same manner, but no improvements were obtained. Similar results could be obtained for higher concatenation levels.
 The main difference is that \ac{HEM} performs worse at higher levels.

\subsubsection{Estimator with Measurement Noise}

We consider a phenomenological noise model as described in the main text, where Pauli errors occur independently between qubits and bit-flips independently on each syndrome bit.
 The error rates can be different on each data qubit and syndrome bit. 
 The maximum-likelihood decoder, described in the main text, can be easily modified to include these measurement errors. This is done simply by including the measurement errors as additional nodes, connected to the factor corresponding to the syndrome bit that they flip.
 This does not destroy the tree structure, and thus decoding and determination of marginals can still be done via \ac{BP}.
 Using this adapted maximum-likelihood decoder, we can estimate error rates in the same way as described in the main text.
 It should be noted that we do not consider a fault-tolerant scheme with repeated measurements here, so identification of measurement errors is impossible on the first concatenation level.
 Similar to the experiments in the main text, we take the data qubits to be affected by a depolarizing channel with error rate $p$ each, and on each syndrome bit the outcome is flipped with probability $p_{m}$.
 In the \autoref{fig:ErrorRates_WithMeasurementErrors}, some results are shown.
\begin{figure}[tb]
\subfloat[$\alpha = 20$ \label{fig:ErrorRates_WithMeasurementErrors_p0005_A20}]{
\includegraphics[scale=1]{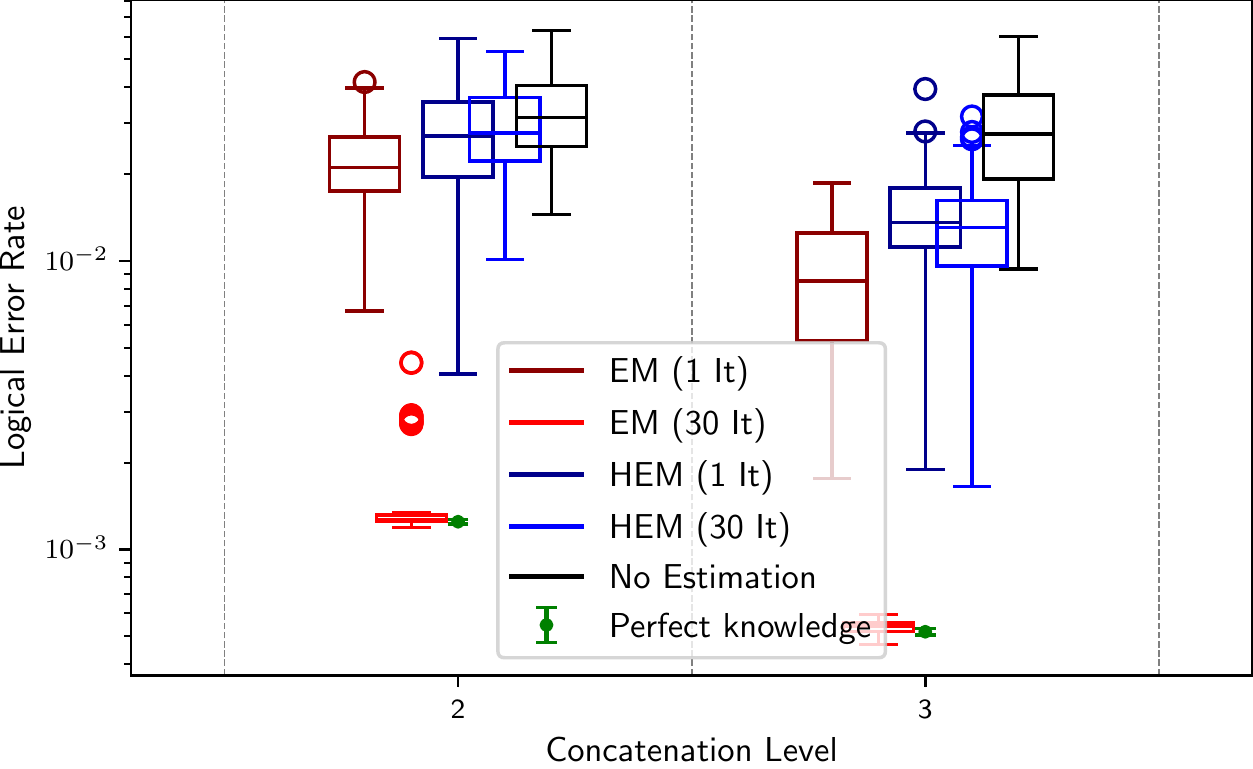}}\\
\subfloat[$\alpha = 200$ \label{fig:ErrorRates_WithMeasurementErrors_p0005_A200}]
{\includegraphics[scale=1]{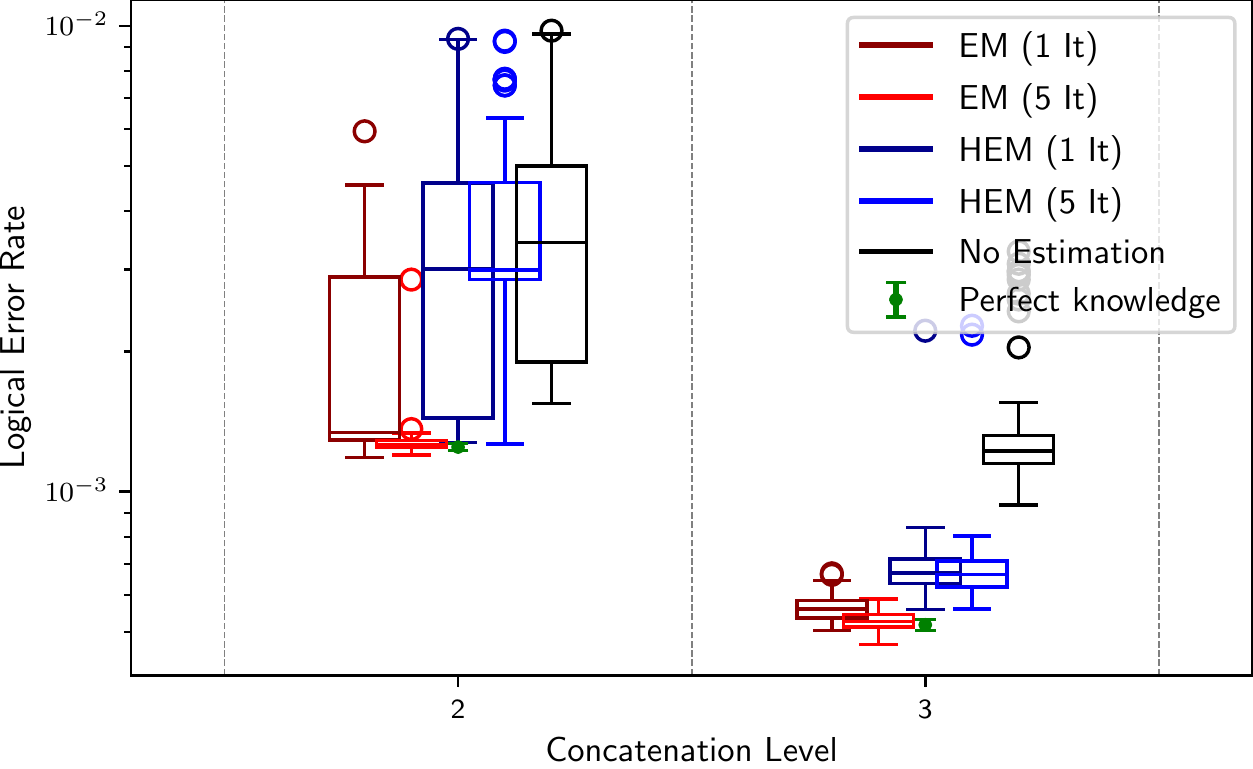}}
\caption{Logical error rate of the maximum likelihood decoder with measurement errors for different values of $\alpha$. Also shown are error rates of the perfect knowledge decoder. The parameters were $p  = 0.005$, $p_m = 0.005$, $\nest = 10^4$. \label{fig:ErrorRates_WithMeasurementErrors}}
\end{figure}
As can be seen in \autoref{fig:ErrorRates_WithMeasurementErrors_p0005_A20}, for a bad initialization \ac{HEM} is unable to improve much over the initialization, while \ac{EM} still reaches optimal error rates even in the presence of measurement errors, although the amount of iterations required is larger than in the case without measurement errors.
The \ac{MSE} of the estimation was again close to the \ac{CRB} (not shown here). The case of a better initialization is shown in \autoref{fig:ErrorRates_WithMeasurementErrors_p0005_A200}.
In this setting, \ac{HEM} clearly improves over the initialization, especially at higher concatenation levels. It is still outperformed by \ac{EM}, and the difference is more significant at the second concatenation level. The amount of iterations before convergence of \ac{EM} is only about 5, compared to about 30 for the bad initialization case.

\FloatBarrier
\section{Conclusion}
We investigated the estimation of stochastic error models from the syndrome statistics of a quantum error correction code, establishing both theoretical results on parameter identifiability as well as a practical estimation method. The results do not rely on the limit of low error rates, and our estimator outperforms other recently proposed methods \cite{Wootton_QiskitBenchmarking,Fowler_ScalableExtractionOfErrorModelsFromQEC,Huo_2017}.  
Our work opens up a number of new research directions.
On the theoretical side, it will be interesting to use our identifiability condition to prove results beyond perfect codes. It might also be possible to extend the result on perfect codes beyond the case of equal rates, since numerical results suggest that this assumption is not crucial.
The proposed estimator could be straightforwardly applied to quantum low density parity check codes, although the problem arises that \acl{BP} is no longer exact in this scenario.
One could also combine our estimator with methods from Refs.~\cite{Cappe_OnlineEM, Huo_2017} to estimate time-dependent error rates and avoid the re-decoding overhead, or consider its application to fault-tolerant circuits as was done for the hard assignment method in Ref.~\cite{Fowler_ScalableExtractionOfErrorModelsFromQEC}.

\section{Code Availability}
Our Python implementation of the estimator is available on \href{https://github.com/TWagner2/NoiseEstimationFromSyndromes}{GitHub}.

\begin{acknowledgments}
This work was funded by the Deutsche Forschungsgemeinschaft (DFG, German Research Foundation) under Germany's Excellence Strategy – Cluster of Excellence Matter and Light for Quantum Computing (ML4Q) EXC 2004/1 – 390534769.
Plots were created using the Matplotlib \cite{Hunter_Matplotlib} library. The simulations made use of the Numpy \cite{numpy} and Numba \cite{Lam_Numba} Python packages.
\end{acknowledgments}

\bibliography{bibliography}

\end{document}